\newcommand{\brak}[1]{\left\langle #1\right\rangle}
\newcommand{\infnorm}[1]{\Vert #1\Vert_\infty}
\renewcommand{\Bar}[1]{\overline{#1}}
\DeclareMathOperator{\polylog}{polylog}
\newcommand{\Number}[1]{\mathbb{#1}}
\newcommand{\Text}[1]{\text{#1}}
\newcommand{\R}{\Number{R}}
\newcommand{\sS}{\Number{S}}
\newcommand{\dint}{d_1}
\newcommand{\dmax}{d_{\infty}}
\newcommand{\wt}{\Text{c}}
\DeclareMathOperator{\PHT}{\mathrm{PHT}}
\DeclareMathOperator{\Dgm}{\mathrm{Dgm}}
\newtheorem*{theorem*}{Theorem}
\newtheorem{theorem}{Theorem}[section]
\newtheorem{corollary}[theorem]{Corollary}
\newtheorem{lemma}[theorem]{Lemma}
\theoremstyle{definition}
\newtheorem{remark}[theorem]{Remark}
\titlespacing*{\paragraph}{0pt}{0.4\baselineskip}{0.5em}
\title{\LARGE Computing the Bottleneck Distance between Persistent Homology Transforms}
\author[1]{\large Michael Kerber}
\author[1,2]{\large Elena Xinyi Wang}
\affil[1]{\footnotesize Institute of Geometry, Graz University of Technology}
\affil[2]{\footnotesize Department of Informatics, University of Fribourg}
\begin{document}
\date{}
\maketitle

\begin{abstract}
The Persistent Homology Transform (PHT) summarizes a shape in $\R^m$ by collecting persistence diagrams obtained from linear height filtrations in all directions on $\mathbb{S}^{m-1}$. 
It enjoys strong theoretical guarantees, including continuity, stability, and injectivity on broad classes of shapes. 
A natural way to compare two PHTs is to use the bottleneck distance between their diagrams as the direction varies. 
Prior work has either compared PHTs by sampling directions or, in 2D, computed the exact \emph{integral} of bottleneck distance over all angles via a kinetic data structure. 
We improve the integral objective to $\tilde O(n^5)$ in place of the earlier $\tilde O(n^6)$ bound, where $n$ denotes the number of simplices. 
For the \emph{max} objective, we give an $\tilde O(n^3)$ expected-time algorithm in $\R^2$ and an $\tilde O(n^5)$ expected-time algorithm in $\R^3$.
\end{abstract}

\paragraph*{Acknowledgements.} This research has been supported by the Austrian Science Fund (FWF) grant P 33765-N.

\section{Introduction}
Persistent homology captures the multiscale evolution of homology classes along a filtration induced by a real-valued function on a space. 
It has proved useful across fields such as materials science~\cite{edelsbrunner2024mergetree, hiraoka2016hierarchical}, computational biology~\cite{giusti2015clique, reimann2017cliques}, as well as machine learning~\cite{carriere2020perslay, Hensel2021Survey}, providing robust descriptors for high-dimensional data.
A persistence diagram is a concise multiset summary of persistent homology, recording the birth--death times of features. 
Similarity between objects is then assessed by comparing their diagrams, most commonly via the bottleneck distance. 
Computationally, this reduces to a minimum-bottleneck perfect matching on a bipartite graph with diagonal copies, and algorithms for this problem have been studied extensively in both theory and practice~\cite{Cabello2024Matching, Dey2018Bottleneck,Efrat2001,HopcroftKarp1973,KerberGeometryHelps2017}.

Since a single diagram depends on the chosen filter and may miss geometry, the Persistent Homology Transform (PHT) collects diagrams from all directions to retain fuller information~\cite{TurnerPHT2014}.
It assigns to a shape in $\R^m$ the collection of persistence diagrams obtained from linear height filtrations over all directions on $\sS^{m-1}$. 
It possesses many desirable theoretical properties such as continuity, stability, and injectivity~\cite{CurryPHT2018,GhristPHT2018,TurnerPHT2014}.
These properties make the PHT an effective topological descriptor for applications.
However, unlike other directional-transform--based signatures such as the Euler Characteristic Transform, the PHT lacks efficient comparison algorithms, limiting its practical use.
A natural way to compare PHTs is to sample directions and compute the bottleneck distance only at those angles.
This inevitably leads to information loss. 
To integrate the bottleneck distance over the parameter domain (e.g., $\sS^{m-1}$), prior work by Munch, Wang, and Wenk introduced a kinetic ``hourglass'' approach that maintains matchings as the parameter moves.
It computes this integral with an $\tilde O(n^6)$ bound for data in $\R^2$~\cite{munch2025kinetic}. 
In this work, we focus on the max variant, which we define as the supremum of the bottleneck distance over all directions.

\paragraph*{Our Contributions:}
We provide an $\tilde O(n^3)$ expected-time algorithm to compute the max bottleneck distance between PHTs of shapes in $\R^2$ and an $\tilde O(n^5)$ expected-time algorithm in $\R^3$,
where $\tilde{O}$ means that we ignore logarithmic factors.
Additionally, we improve the computation of the integral distance from the previous $\tilde O(n^6)$ bound to $\tilde O(n^5)$ in $\R^2$.

For the max bottleneck distance, we use techniques from computational geometry and topology and inspired by the strategy of Bjerkevik and Kerber for computing the matching distance for two-parameter modules~\cite{Bjerkevik2023}.
At a high level, the solution for 2D and 3D are similar.
We first reformulate the problem as a decision problem: given a threshold $\lambda$, we determine whether the max bottleneck distance is at most $\lambda$. 
To solve this, we sweep the direction parameter around the sphere while maintaining the bottleneck matching between the two persistence diagrams. 
The key insight is that this matching changes only at discrete ``events'': points where persistence diagrams change combinatorially or where edge weights cross the threshold $\lambda$. 
At each event, we update the matching using a single augmenting path search in the geometric bipartite graph, exploiting the fact that only local changes occur.

We then identify that the maximum bottleneck distance must occur at one of a finite set of candidate values, corresponding to critical points and intersections of the difference functions between simplices. 
While there are $O(n^4)$ candidates in 2D and $O(n^6)$ in 3D, we avoid exhaustive enumeration through a randomized incremental construction. 
We maintain a band containing the true maximum and iteratively refine it by processing simplices in random order, testing whether their associated candidates lie within the current band. 
The randomization ensures that we expect only $O(\log n)$ refinements, reducing the expected complexity to $\tilde{O}(n^3)$ in 2D and $\tilde{O}(n^5)$ in 3D.

\paragraph*{Outline:}
We begin in Section~\ref{Sec:Background} with a brief overview of persistent homology, persistence diagrams, and the PHT, and we formalize the problem.
Section~\ref{Sec:2D} presents our $\R^2$ algorithm: first at a high level, then with full technical details. 
The improvement for the integral distance is shown in Section~\ref{Sec:2Dint}.
Section~\ref{Sec:3D} extends the approach to $\R^3$ before concluding in Section~\ref{Sec:Conclusion}.

\section{Preliminary}
\label{Sec:Background}
\subsection{Persistent Homology}
Persistent homology is a multi-scale summary of the connectivity of objects in a nested sequence of subspaces; see \cite{DeyWang2017,EdelsHarer2010} for an introduction. 
    For the purposes of this section, we define a 
    \emph{persistence diagram} to be a finite collection of points $\{(b_i,d_i)\}_i$ with $d_i \geq b_i$ for all $i$. 
    
    Given two persistence diagrams $X$ and $Y$, a partial matching is a bijection $\eta:X' \to Y'$ on a subset of the points $X' \subseteq X$ and $Y' \subseteq Y$; we denote this by $\eta: X \rightleftharpoons Y$. 
    The cost of a partial matching is the maximum over the $L_\infty$-norms of all pairs of matched points and the distance between the unmatched points to the diagonal:
    \begin{equation}
    \label{eqt:bottleneck}
        c(\eta) = \max \left( 
        \{ \|x-\eta(x)\|_\infty \mid x \in X'\} 
        \cup
        \{ \tfrac{1}{2}|z_2-z_1|  \mid 
        (z_1,z_2) \in (X \setminus X') \cup (Y \setminus Y') \}
        \right)
    \end{equation}
    and 
    the bottleneck distance is defined as
    $
    d_B(X, Y) = \inf_{\eta:X\rightleftharpoons Y} c(\eta)
    $.

    We can reduce finding the bottleneck distance between persistence diagrams to the problem of finding the bottleneck cost of a bipartite graph.
    Let $X$ and $Y$ be two persistence diagrams given as finite lists of off-diagonal points. 
    For any off-diagonal point $z = (z_1, z_2)$, the orthogonal projection to the diagonal is 
    $z' = ((z_1+z_2)/2, (z_1+z_2)/2)$. 
    Let $\Bar{X}$ (resp.~$\Bar{Y}$) be the set of orthogonal projections of the points in $X$ (resp.~$Y$). 
    Set $U = X\sqcup \Bar{Y}$ and $V = Y\sqcup \Bar{X}$.
    We define the complete bipartite graph $G = (U\sqcup V, U\times V, \wt)$, where for $u\in U$ and $v\in V$, the weight function $\wt$ is given by
    \[
    \wt(uv) = 
    \begin{cases}
        \infnorm{u - v} &\text{if $u\in X$ or $v\in Y$}\\
        0 &\text{if $u\in \Bar{X}$ and $v\in \Bar{Y}$}.
    \end{cases}
    \]
    An example of the bipartite graph construction is shown in Figure \ref{fig:persBottleneck}.
    This graph can be used to compute the bottleneck distance of the input diagrams because of the following lemma.
    \begin{lemma}[Reduction Lemma \cite{EdelsHarer2010}]
    \label{lem:reduction}
        For the above construction of $G$,  $d_B(G) = d_B(X, Y)$.
    \end{lemma}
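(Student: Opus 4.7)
The plan is to prove both inequalities $d_B(X,Y) \le d_B(G)$ and $d_B(X,Y) \ge d_B(G)$ by constructing, from each side, a matching on the other side of no greater cost. The bipartite graph has $|U|=|V|=|X|+|Y|$, so perfect matchings exist, and the construction of $G$ is designed precisely so that edges between ``off-diagonal'' vertices carry the $L_\infty$ cost while edges between two diagonal projections are free; this parallels the split in the definition of $c(\eta)$ into matched-pair cost and diagonal cost, which is what makes the reduction plausible.

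\textbf{From diagrams to graph ($\le$ direction).} Given a partial matching $\eta: X' \to Y'$, I would build a perfect matching $M$ in $G$ as follows. For each pair $(x,\eta(x))$ with $x \in X'$, put the edge $(x,\eta(x)) \in U \times V$ into $M$; its weight is $\|x-\eta(x)\|_\infty$, one of the terms in $c(\eta)$. For each unmatched $x \in X \setminus X'$, match it in $M$ to its own diagonal projection $\bar x \in \bar X \subset V$; the edge weight is $\|x-\bar x\|_\infty = \tfrac12|x_2-x_1|$, exactly the diagonal cost in $c(\eta)$. Symmetrically match each unmatched $y \in Y \setminus Y'$ to $\bar y \in \bar Y \subset U$. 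A quick bookkeeping check shows the still-unmatched vertices are exactly $\{\bar y : y \in Y'\} \subset U$ and $\{\bar x : x \in X'\} \subset V$, equal-sized sets of diagonal projections; pair them arbitrarily, incurring only zero-weight edges. Then $c(M) = c(\eta)$, so $d_B(G) \le d_B(X,Y)$.

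\textbf{From graph to diagrams ($\ge$ direction).} Given a perfect matching $M$ of $G$, define $\eta$ to be the set of edges of $M$ that lie in $X \times Y$ (i.e.\ both endpoints off-diagonal). This is a partial matching of $X$ and $Y$. For each matched pair the cost contribution in $c(\eta)$ equals its weight in $M$. For an unmatched $x \in X$, the matching $M$ pairs $x$ with some diagonal projection $p \in \bar X \cup$-wait, $x \in U$ so its partner is in $V = Y \sqcup \bar X$; if the partner were in $Y$, $x$ would be matched in $\eta$, so the partner lies in $\bar X$. Since every point on the diagonal has $L_\infty$ distance to $x$ at least $\tfrac12|x_2-x_1|$, the diagonal contribution of $x$ to $c(\eta)$ is bounded by the weight of its $M$-edge. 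The same argument handles unmatched $y \in Y$. Hence $c(\eta) \le c(M)$, and $d_B(X,Y) \le d_B(G)$.

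\textbf{Main obstacle.} There is no deep obstruction here; the main thing to get right is the careful vertex bookkeeping in the first direction (verifying that after matching off-diagonal points to their own projections, the leftover diagonal vertices on the two sides have equal cardinality and can therefore be paired up at zero cost), and the observation in the second direction that any diagonal point is at least the diagonal distance away from a given off-diagonal point in the $L_\infty$ metric. Taking infima on both sides then gives the equality $d_B(G) = d_B(X,Y)$.
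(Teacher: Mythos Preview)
The paper does not actually prove this lemma; it states it with a citation to Edelsbrunner--Harer and moves on, so there is no in-paper argument to compare against. Your two-inequality argument is the standard proof (and essentially the one given in the cited reference): build a perfect matching in $G$ from a partial matching $\eta$ by sending unmatched points to their own projections and pairing the leftover diagonal vertices at zero cost, and conversely restrict a perfect matching of $G$ to its $X\times Y$ edges while using the fact that the $L_\infty$ distance from an off-diagonal point to \emph{any} diagonal point is at least $\tfrac12|x_2-x_1|$. Both directions and the bookkeeping (in particular $|\{\bar y:y\in Y'\}|=|Y'|=|X'|=|\{\bar x:x\in X'\}|$) are correct as you wrote them. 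The only cleanup needed is cosmetic: the stray ``$\bar X\cup$-wait'' fragment and the conversational self-correction should be edited out of a final version.
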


    \begin{figure}
    \centering
    \includegraphics[width=0.6\linewidth]{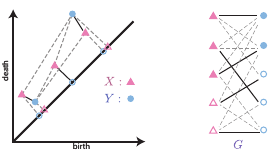}
    \caption{Construction of the bipartite graph $G$ based on the persistence diagrams $X$ and $Y$. 
    }
    \label{fig:persBottleneck}
\end{figure}

\subsection{Persistent Homology Transform}
Let $K$ be a finite simplicial complex embedded in $\R^m$. 
For any direction $\omega \in \sS^{m-1}$, the height function $h_\omega: |K| \to \R$ is defined by $h_\omega(x) = \brak{x, \omega}$. 
The lower-star filtration induced on the abstract complex $K$ is given by:
\[
h_\omega(\sigma) = \max\{ h_\omega(v) \mid v \in \sigma \},
\]
and the sublevelset at height $a$ is the full subcomplex induced by vertices $\{v \in K \mid h_\omega(v) \leq a\}$. Filtering $K$ by $h_\omega$ produces a persistence diagram $\Dgm_k(h^K_\omega)$.
We focus on $k = 0, 1$ for $m = 2$, and $k = 0, 1, 2$ for $m = 3$.

The \emph{persistent homology transform (PHT)} is defined as:
\[
\PHT(K): \sS^{m-1} \to \mathcal{D}, \quad \omega \mapsto \Dgm(h_\omega^K),
\]
where $\mathcal{D}$ is the space of persistence diagrams. The PHT is injective \cite{TurnerPHT2014, CurryPHT2018}, making it a faithful shape representation.

\subsection{Maximum Bottleneck Distance}
In this work, we are interested in comparing two finite embedded simplicial complexes, $K, K'\in\R^m$, $m = 2$ or $3$, by comparing their PHTs. 
We define the \emph{maximum bottleneck distance} between $K$ and $K'$ to be:
\[\dmax(\text{PHT}(K), \text{PHT}(K')) = \max_{\omega\in\sS^{m-1}}d_B\left(\mathrm{Dgm}(h_\omega^{K}), \mathrm{Dgm}(h_\omega^{K'})\right).\]

This distance is a variant of a similar integral distance that has been studied previously by Munch, Wang, and Wenk~\cite{munch2025kinetic}:
\[\dint(\text{PHT}(K), \text{PHT}(K')) = \int_{\sS^{m-1}}d_B\left(\mathrm{Dgm}(h_\omega^{K}), \mathrm{Dgm}(h_\omega^{K'})\right)d\omega.\]

Both $d_\infty$ and $d_1$ inherit stability from the bottleneck distance.
However, they differ from a computational perspective.
Computing $d_1$ requires full knowledge of the whole domain, which is inevitably expensive. 
Our $d_\infty$ metric seeks the maximum value, which allows us to reformulate the problem as finding the supremum over a finite set of candidate values rather than tracking the bottleneck distance continuously.

A key property that makes our algorithmic approach feasible is that the bottleneck distance function is Lipschitz continuous, which is a direct consequence of the stability of the bottleneck distance:
\begin{lemma}
\label{lem:continuity}
Let $H(\omega):=d_B\bigl(\mathrm{Dgm}(h^K_\omega),\mathrm{Dgm}(h^{K'}_\omega)\bigr)$ denote the bottleneck distance between persistence diagrams. 
Then the function $H: \mathbb{S}^{m-1} \to \mathbb{R}$ is Lipschitz continuous with
constant $R_K + R_{K'}$, where $R_K = \max_{v \in K^{(0)}} \|v\|$ and
$R_{K'} = \max_{v \in (K')^{(0)}} \|v\|$.
\end{lemma}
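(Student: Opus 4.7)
The plan is to show that $H$ is in fact Lipschitz continuous, which immediately implies continuity. I will combine the bottleneck stability theorem with an elementary estimate on how the height functions vary with the direction.

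First, I would invoke the bottleneck stability theorem: for any two filter functions $f, g : |K| \to \R$ defined on the same simplicial complex, $d_B(\mathrm{Dgm}(f), \mathrm{Dgm}(g)) \le \|f - g\|_\infty$, where the sup-norm is taken over the vertices of $K$ (since the lower-star filtration depends only on vertex values). For two directions $\omega, \omega' \in \sS^{m-1}$ and any vertex $v$ of $K$, Cauchy--Schwarz gives
\[
|h_\omega(v) - h_{\omega'}(v)| = |\brak{v, \omega - \omega'}| \le \|v\|\cdot\|\omega - \omega'\|.
\]
Setting $R_K := \max_{v \in K} \|v\|$ and analogously $R_{K'}$ for $K'$, I obtain
\[
d_B(\Dgm(h^K_\omega), \Dgm(h^K_{\omega'})) \le R_K\|\omega - \omega'\|, \qquad
d_B(\Dgm(h^{K'}_\omega), \Dgm(h^{K'}_{\omega'})) \le R_{K'}\|\omega - \omega'\|.
\]

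Next, I would apply the triangle inequality for the bottleneck distance to the four diagrams involved. Writing $H(\omega) = d_B(\Dgm(h^K_\omega), \Dgm(h^{K'}_\omega))$, two applications of the triangle inequality yield
\[
|H(\omega) - H(\omega')| \le d_B(\Dgm(h^K_\omega), \Dgm(h^K_{\omega'})) + d_B(\Dgm(h^{K'}_\omega), \Dgm(h^{K'}_{\omega'})) \le (R_K + R_{K'})\|\omega - \omega'\|.
\]
This establishes that $H$ is $(R_K + R_{K'})$-Lipschitz, and hence continuous on $\sS^{m-1}$.

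No step here is really a serious obstacle: both $K$ and $K'$ are finite and embedded, so the constants $R_K, R_{K'}$ are finite, and the stability theorem is a black box from the literature. The only mild subtlety is being careful that the stability bound is stated in terms of the $L_\infty$ difference of the filter values on simplices rather than on all of $|K|$, but for lower-star filtrations these agree up to the max over vertices of each simplex, which is dominated by the vertex-wise sup-norm difference. I would state this equivalence explicitly in one sentence to keep the argument self-contained.
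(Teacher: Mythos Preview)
Your proof is correct and matches the paper's own argument essentially step for step: Cauchy--Schwarz on the height functions, the bottleneck stability theorem, and then the (reverse) triangle inequality to conclude that $H$ is $(R_K+R_{K'})$-Lipschitz. The only cosmetic difference is that the paper phrases the final step as a single reverse triangle inequality $|d_B(A,B)-d_B(C,D)|\le d_B(A,C)+d_B(B,D)$ rather than ``two applications of the triangle inequality,'' but these are the same thing.
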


\begin{proof}
For any $\omega, \omega' \in S^{m-1}$ and any vertex $v \in K^{(0)}$, we have
\[
|h^K_\omega(v) - h^K_{\omega'}(v)| = |\langle v, \omega \rangle - \langle v, \omega' \rangle| = |\langle v, \omega - \omega' \rangle| \leq \|v\| \cdot \|\omega - \omega'\|.
\]
Since $K$ is a finite simplicial complex, let $R_K = \max_{v \in K^{(0)}} \|v\|$ denote the maximum norm of any vertex. Then
\[
\|h^K_\omega - h^K_{\omega'}\|_\infty \leq R_K \cdot \|\omega - \omega'\|.
\]
By the stability theorem for persistence diagrams, which states that $d_B(\Dgm(f), \Dgm(g)) \leq \|f - g\|_\infty$ for tame functions $f$ and $g$, we obtain
\[
d_B\bigl(\Dgm(h^K_\omega), \Dgm(h^K_{\omega'})\bigr) \leq R_K \cdot \|\omega - \omega'\|.
\]
Thus $\omega \mapsto \Dgm(h^K_\omega)$ is Lipschitz continuous, and by the same argument, $\omega \mapsto \Dgm(h^{K'}_\omega)$ is Lipschitz continuous with constant $R_{K'} = \max_{v \in (K')^{(0)}} \|v\|$.

Now, the reverse triangle inequality for the bottleneck metric gives
\[
|d_B(A, B) - d_B(C, D)| \leq d_B(A, C) + d_B(B, D)
\]
for any persistence diagrams $A, B, C, D$. Applying this with $A = \Dgm(h^K_\omega)$, $B = \Dgm(h^{K'}_\omega)$, $C = \Dgm(h^K_{\omega'})$, and $D = \Dgm(h^{K'}_{\omega'})$, we obtain
\begin{align*}
|H(\omega) - H(\omega')| &\leq d_B\bigl(\Dgm(h^K_\omega), \Dgm(h^K_{\omega'})\bigr) + d_B\bigl(\Dgm(h^{K'}_\omega), \Dgm(h^{K'}_{\omega'})\bigr) \\
&\leq (R_K + R_{K'}) \|\omega - \omega'\|.
\end{align*}
Therefore $H$ is Lipschitz continuous.
\end{proof}

\section{Computing $\dmax$ in $\R^2$}
\label{Sec:2D}
Our setup in this section is as follows: we are given two geometric simplicial complexes $K$, $K'$ in $\R^2$ with at most $n$ simplices each.
The goal is to compute $\dmax$: we first introduce the necessary notation, then give a high-level overview, and finally present the technical details.

\subsection{Events}
For each simplex $\alpha \in K \cup K'$, we define an insertion curve $I_\alpha(\omega)$. 
These curves record the height at which each simplex enters the sublevel-set filtration induced by direction $\omega \in S^1$. 
Explicitly, if the $k$-simplex $\alpha$ has vertices $\{v_0, \ldots, v_k\} \subseteq V(K)$, its insertion curve is
\[
I_\alpha(\omega) := \max_{0 \leq i \leq k} h_\omega(v_i), \quad h_\omega(x) = \langle x, \omega \rangle.
\]
Each $I_\alpha$ is a piecewise-trigonometric function on $S^1$.

There are at most $2n$ such curves. We now define \emph{difference curves} that capture both matched and unmatched contributions to the bottleneck distance. For distinct simplices $\alpha, \sigma$:
\[
\Delta_{\alpha,\sigma}(\omega) := \begin{cases}
|I_\alpha(\omega) - I_\sigma(\omega)| & \text{if } \alpha, \sigma \text{ are from different complexes}  \\
\frac{1}{2}|I_\alpha(\omega) - I_\sigma(\omega)| & \text{if } \alpha, \sigma \text{ are from the same complex}
\end{cases}
\]
These are real-valued functions on $S^1$; geometrically, one can picture their graphs as
curves in $S^1 \times \R_{\geq 0}$.
The factor of $\frac{1}{2}$ corresponds to the contribution of unmatched points to the bottleneck distance, see Equation~\ref{eqt:bottleneck}.
There are $O(n^2)$ such curves for $|K|, |K'| = O(n)$. 
We call the vertex $v_i$ achieving $\max_{0 \leq j \leq k}\langle v_j, \omega \rangle$ the \emph{active vertex} of $\alpha$ at $\omega$. 
As $\omega$ varies, the active vertex changes at discrete directions where two vertices achieve the same height. 
Within any interval where the active vertices remain fixed, the difference curve simplifies to either $|\langle u - w, \omega \rangle|$ (for different complexes) or $\frac{1}{2}|\langle u - w, \omega \rangle|$ (for same complex), which is a simple trigonometric function. 
See Figure~\ref{fig:sector} for an example. 
The insertion and difference curves are both well-behaved. See Figure~\ref{fig:2d-events} for an example.

\begin{figure}
    \centering
    \includegraphics[width=0.8\linewidth]{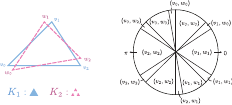}
    \caption{On the left are two simplices with vertices $v_{0, 1, 2}$ and $w_{0, 1, 2}$ respectively. On the right is $\sS^1$ partitioned into regions with the same active vertex pairs $(v_i, w_j)$.}
    \label{fig:sector}
\end{figure}

\begin{lemma}
\label{lem:event-bound}
Each difference curve $\Delta_{\alpha,\sigma}$ crosses any horizontal line $y = \lambda$ at most $O(1)$ times as $\omega$ varies over $[0,2\pi]$. 
Since there are $O(n^2)$ difference curves in total, this yields $O(n^2)$ crossings with any horizontal line. 
In particular, there are $O(n^2)$ insertion-curve intersections, as these correspond to difference curves crossing the line $y=0$.
\end{lemma}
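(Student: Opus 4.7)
The plan is to first show the single-curve bound and then multiply by the number of difference curves; the last sentence of the statement is essentially a corollary obtained by specializing to $\lambda=0$.

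For the single-curve bound I would analyze the piecewise structure of $\Delta_{\alpha,\sigma}(\omega)$. Since each simplex $\alpha$ in $\R^2$ has at most three vertices, the circle $\sS^1$ splits into a constant number of arcs on which the active vertex of $\alpha$ (as defined just before the lemma) is fixed. Intersecting the partitions for $\alpha$ and $\sigma$ yields $O(1)$ arcs on which the active vertex pair $(u,w)$ is constant. On such an arc the insertion curves simplify to linear forms in $\omega$, so the excerpt already notes that
\[
\Delta_{\alpha,\sigma}(\omega) \;=\; c\,\bigl|\langle u-w,\omega\rangle\bigr|,\qquad c\in\{\tfrac12,1\},
\]
which, writing $\omega=(\cos\omega,\sin\omega)$, becomes $c\,|A\cos(\omega-\phi)|$ for constants $A,\phi$ depending on $u-w$.

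Next I would count crossings of $y=\lambda$ on a single arc. The equation $A\cos(\omega-\phi)=\pm\lambda/c$ has at most four solutions in one full period of $2\pi$, and hence at most four on the sub-arc. Summing this constant bound over the $O(1)$ arcs gives $O(1)$ crossings of $y=\lambda$ by $\Delta_{\alpha,\sigma}$. Some care is needed at arc boundaries, where one active vertex is exchanged for another: by definition of the active vertex, the two candidate heights agree there, so $\Delta_{\alpha,\sigma}$ is continuous across boundaries and no spurious crossings are introduced. The absolute value also produces only finitely many non-smooth points, each of which contributes at most one additional crossing, still $O(1)$ total.

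Having established that each of the $O(n^2)$ difference curves contributes $O(1)$ crossings, the aggregate bound of $O(n^2)$ crossings with any horizontal line follows immediately. For the final sentence I would note that $I_\alpha(\omega)=I_\sigma(\omega)$ is equivalent to $\Delta_{\alpha,\sigma}(\omega)=0$, so intersection points of two insertion curves are exactly the zeros of the corresponding difference curve, and the $O(n^2)$ bound applies with $\lambda=0$. The only mildly delicate step is the boundary-of-arc argument above; everything else is routine case analysis on sinusoids.
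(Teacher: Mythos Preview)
Your proposal is correct and follows essentially the same route as the paper: partition $\sS^1$ into the $O(1)$ sectors on which the active vertex pair is fixed, observe that on each sector $\Delta_{\alpha,\sigma}$ reduces to a single shifted sinusoid $r|\cos(\theta-\phi)|$ with at most four solutions of $r|\cos(\theta-\phi)|=\lambda$, sum over sectors, and then multiply by the $O(n^2)$ difference curves. Your treatment of arc boundaries and the specialization to $\lambda=0$ for insertion-curve intersections are slightly more explicit than the paper's, but the argument is otherwise identical.
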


The bound on insertion-curve intersections follows immediately since we have $O(n)$ curves total, giving at most $\binom{O(n)}{2} = O(n^2)$ pairwise intersections. 
For threshold crossings, each difference curve $\Delta_{\alpha,\sigma}$ is piecewise trigonometric with $O(1)$ pieces (as simplices in $\mathbb{R}^2$ have at most 3 vertices), and crosses any threshold $\lambda$ at most $O(1)$ times. 
With $O(n^2)$ pairs total, this yields $O(n^2)$ threshold crossings. 
The detailed geometric analysis is in Appendix~\ref{appendix:event-bound-proof}.

In each direction $\omega$, the bottleneck distance is realized by the value of a difference curve $\Delta_{\alpha,\sigma}(\omega)$. 
When this maximum is unique in a neighborhood of $\omega$, the continuity of the bottleneck function $H(\omega) := d_B\left(\text{Dgm}(h^K_\omega), \text{Dgm}(h^{K'}_\omega)\right)$ ensures that the same curve or persistence value realizes the bottleneck distance in an $\varepsilon$-neighborhood.
We call the points in $S^1$ where the realizing curve or value changes locally \emph{events}, and the values $H$ attains at events are the \emph{candidate values}.

\begin{theorem}[Events for $d_\infty$ in $\mathbb{R}^2$]\label{thm:R2events}
The direction $\omega^*$ attaining $d_\infty$ occurs at either
\begin{enumerate}[noitemsep, topsep=0pt]
\item a local maximum of a difference curve $\Delta_{\alpha,\sigma}$, or
\item an intersection of two difference curves. See Figure~\ref{fig:2d-events} for an illustration.
\end{enumerate}
\end{theorem}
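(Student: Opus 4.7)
The plan is to analyze the structure of $H$ near its global maximizer $\omega^*$ using the key observation that $H(\omega)$ always coincides with the value of some difference curve $\Delta_{\alpha,\sigma}(\omega)$. First I would record this fact: fix $\omega$ and pick an optimal matching in the reduction graph of Lemma~\ref{lem:reduction}; its bottleneck edge either matches an off-diagonal point of $\mathrm{Dgm}(h^K_\omega)$ to one of $\mathrm{Dgm}(h^{K'}_\omega)$, in which case its weight is $|I_\alpha(\omega)-I_\sigma(\omega)|=\Delta_{\alpha,\sigma}(\omega)$ for simplices $\alpha,\sigma$ from different complexes, or it matches an off-diagonal point to its diagonal projection, in which case its weight is $\tfrac{1}{2}|I_\alpha(\omega)-I_\beta(\omega)|=\Delta_{\alpha,\beta}(\omega)$ for $\alpha,\beta$ in the same complex. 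Consequently the set of realizers
$$R(\omega):=\{(\alpha,\sigma)\,:\,\Delta_{\alpha,\sigma}(\omega)=H(\omega)\}$$
is nonempty for every $\omega$.

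Since $H$ is continuous by Lemma~\ref{lem:continuity} and $\mathbb{S}^1$ is compact, a global maximizer $\omega^*$ exists. I split into two cases based on $|R(\omega^*)|$. If $|R(\omega^*)|\ge 2$, two distinct difference curves attain the common value $H(\omega^*)$ at $\omega^*$, which is precisely an intersection of two difference curves, so case (2) of the theorem holds. If $|R(\omega^*)|=1$, say $R(\omega^*)=\{(\alpha,\sigma)\}$, I would show that $H\equiv\Delta_{\alpha,\sigma}$ on some neighborhood $U$ of $\omega^*$; then, because $\omega^*$ is a global maximizer of $H$, it is a local maximum of $\Delta_{\alpha,\sigma}$ on $U$, giving case (1).

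The neighborhood claim I would prove by contradiction combined with a pigeonhole on the finite set of simplex pairs. If no such $U$ existed, one could choose a sequence $\omega_n\to\omega^*$ with $H(\omega_n)\neq\Delta_{\alpha,\sigma}(\omega_n)$, so each $H(\omega_n)=\Delta_{\alpha_n,\sigma_n}(\omega_n)$ for some realizer $(\alpha_n,\sigma_n)\neq(\alpha,\sigma)$. Since there are only $O(n^2)$ difference curves, a subsequence has a fixed realizer $(\alpha',\sigma')\neq(\alpha,\sigma)$; continuity of $\Delta_{\alpha',\sigma'}$ and of $H$ then yield
$$\Delta_{\alpha',\sigma'}(\omega^*)=\lim_{n}\Delta_{\alpha',\sigma'}(\omega_n)=\lim_{n}H(\omega_n)=H(\omega^*),$$
so $(\alpha',\sigma')\in R(\omega^*)$, contradicting $|R(\omega^*)|=1$.

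The main thing to be careful about is the reduction in the first paragraph: one must make sure both kinds of bottleneck-realizing edges (matched off-diagonal pairs versus off-diagonal-to-diagonal pairs, from the same or different complexes) fit uniformly into the $\Delta_{\alpha,\sigma}$ template, which explains the factor $\tfrac{1}{2}$ in the definition. Notably, I would not need to treat combinatorial events (swaps of insertion heights, where the persistence pairing changes) as a separate case: any such event happens at a zero of $\Delta_{\alpha,\beta}=|I_\alpha-I_\beta|$, and the argument above only uses continuity of $H$ and of the $\Delta$ curves, both of which are preserved across swaps. Once the realizer reduction is established, the dichotomy and the pigeonhole step are routine.
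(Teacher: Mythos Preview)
Your proof is correct and follows the same dichotomy as the paper's own argument: a unique realizing difference curve at $\omega^*$ forces $\omega^*$ to be a local maximum of that curve (case 1), while multiple realizers give an intersection (case 2). Your pigeonhole/continuity step to upgrade point-uniqueness of the realizer to neighborhood-uniqueness is a welcome addition of rigor that the paper leaves implicit.
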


\begin{proof}
Let $\omega^*$ be the direction that attains $d_\infty$. 
By Lemma~\ref{lem:continuity}, if the bottleneck distance at $\omega^*$ is realized by a unique difference curve in a neighborhood, then $\omega^*$ must be a local maximum of that difference curve, giving case (1). 
This includes both matched pairs (when $\alpha \in K, \beta \in K'$) and unmatched point persistence (when $\alpha, \beta$ are from the same complex).
Otherwise, at least two curves achieve the same maximum value at $\omega^*$, giving case (2). 
Note that intersections of insertion curves (where the persistence diagram changes combinatorially) form a special case of (2), as when two insertion curves $I_\alpha, I_\beta$ from the same complex coincide, all difference curves of the form $\Delta_{\alpha,\sigma}$ and $\Delta_{\beta,\sigma}$ are equal.
\end{proof}

\begin{remark}
When we refer to ``intersections'' of difference curves, we mean isolated transverse crossings, not intervals of coincidence.
Endpoints of overlaps also count as intersections, since the realizing curve may change at those boundaries.
\end{remark}

\begin{figure}[t]
    \centering
    \begin{minipage}[b]{0.32\textwidth}
        \centering
        \includegraphics[width=\textwidth]{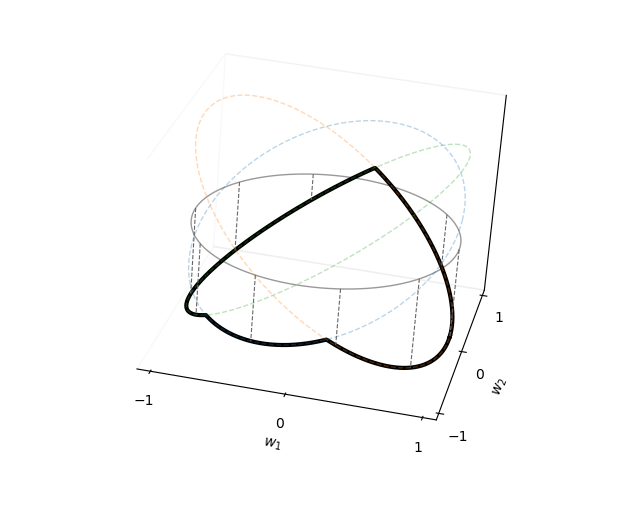}
        (a)
    \end{minipage}
    \hfill
    \begin{minipage}[b]{0.32\textwidth}
        \centering
        \includegraphics[width=\textwidth]{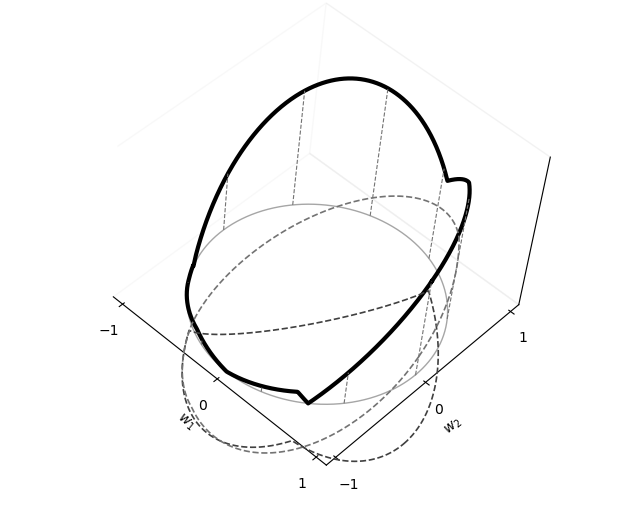}
        (b)
    \end{minipage}
    \hfill
    \begin{minipage}[b]{0.32\textwidth}
        \centering
        \includegraphics[width=\textwidth]{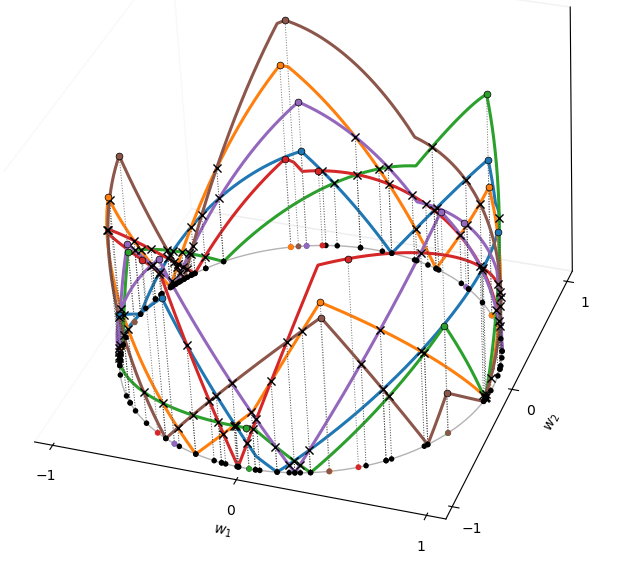}
        (c)
    \end{minipage}
    \caption{(a)~An insertion curve $I_\alpha(\omega)$.
             (b)~A difference curve $\Delta_{\alpha,\sigma}(\omega)$.
             (c)~Arrangement of four difference curves; local maxima and
             pairwise intersections mark the candidate values of
             Theorem~\ref{thm:R2events}.}
    \label{fig:2d-events}
\end{figure}

\begin{corollary}
\label{cor:candidate_count}
There are $O(n^4)$ candidate values where $d_\infty$ may be attained.
\end{corollary}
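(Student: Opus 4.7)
The plan is to bound the two sources of candidates from Theorem~\ref{thm:R2events} separately and observe that the dominant term is $O(n^4)$.

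First I would recall from the setup that the number of difference curves $\Delta_{\alpha,\sigma}$ is $O(n^2)$, since each is indexed by an unordered pair of simplices drawn from $K \cup K'$ which has $O(n)$ simplices. I would also emphasize that each $\Delta_{\alpha,\sigma}$ is piecewise trigonometric with only $O(1)$ pieces: the pieces correspond to maximal intervals of $\omega \in S^1$ on which the active vertex of $\alpha$ and the active vertex of $\sigma$ are both fixed, and since simplices in $\R^2$ have at most three vertices, there are only $O(1)$ such pieces (this is essentially the structure already used in Lemma~\ref{lem:event-bound}).

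For case (1) of Theorem~\ref{thm:R2events}, each difference curve, being a concatenation of $O(1)$ trigonometric arcs on $S^1$, has $O(1)$ local maxima. Summing over the $O(n^2)$ difference curves gives $O(n^2)$ candidate values of this type. For case (2), I would bound the number of transverse intersections and endpoints of overlaps between two difference curves: two piecewise trigonometric functions each with $O(1)$ pieces intersect in $O(1)$ points (each pair of trigonometric pieces meets in $O(1)$ points by a standard calculation on $\cos$ and $\sin$, and there are $O(1)$ pairs of pieces to consider). Since there are $\binom{O(n^2)}{2} = O(n^4)$ pairs of difference curves, the total number of intersection candidates is $O(n^4)$.

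Adding the two bounds gives $O(n^2) + O(n^4) = O(n^4)$ candidate values in total, which is the claim. The main (minor) obstacle is justifying uniformly that the trigonometric pieces have constant descriptive complexity so that both ``local maxima per curve'' and ``intersections per pair of curves'' are truly $O(1)$; this is already handled in the discussion preceding Lemma~\ref{lem:event-bound}, so no additional machinery is needed.
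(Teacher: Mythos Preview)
Your proposal is correct and follows essentially the same argument as the paper's own proof: count $O(n^2)$ difference curves, observe each has $O(1)$ local maxima (type-(1) candidates) and each pair intersects in $O(1)$ points (type-(2) candidates), and conclude that $\binom{O(n^2)}{2}=O(n^4)$ dominates. The paper's proof is slightly terser and defers the $O(1)$-maxima justification to Appendix~\ref{app:maxima}, but the structure is identical.
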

\begin{proof}
There are $O(n^2)$ difference curves: $O(n^2)$ from pairs $(\alpha,\sigma) \in K \times K'$, plus $O(n^2)$ from pairs within the same complex. 
Each has $O(1)$ local maxima (see Appendix~\ref{app:maxima} for the explicit bound), contributing $O(n^2)$ type-(1) candidates. 
For type-(2) candidates, each pair of difference curves intersects at $O(1)$ heights, giving $\binom{O(n^2)}{2} = O(n^4)$ intersections total. Thus the total number of candidates is $O(n^4)$.
\end{proof}

\begin{remark}
When simplices share vertices, their difference curves may partially coincide. 
For instance, if $\alpha, \beta \in K$ share vertex $v$ and $\sigma, \tau \in K'$ share vertex $w$, then $\Delta_{\alpha,\sigma}$ and $\Delta_{\beta,\tau}$ coincide in regions where $v$ and $w$ are simultaneously active. 
Despite these overlaps, each difference curve retains a unique identity at the combinatorial level through its defining simplex pair. 
In particular, our candidate enumeration remains valid: if the maximum bottleneck distance is achieved at direction $\omega^*$, then at least one combinatorially distinct difference curve must realize this maximum, and this curve's critical points and intersections are in our candidate set. 
The full technical detail is in Appendix~\ref{app:overlaps}.
\end{remark}

\subsection{Overview of the algorithm}
We describe the high-level structure of our algorithm next. On this level, our approach is similar to the algorithm of Bjerkevik and Kerber to compute the matching distance~\cite{Bjerkevik2023}. 
However, the approaches differ in the technical realization of the predicates which are described in the subsequent subsections.

\paragraph*{Decision procedure.} 
Given a threshold $\lambda \geq 0$, we decide whether $\dmax \leq \lambda$ by sweeping $\omega$ around $\mathbb{S}^1$ and maintaining a bottleneck matching. 
As $\omega$ rotates, we update the persistence diagrams at $O(n^2)$ insertion-curve intersections and perform single augmenting-path updates whenever an edge weight crosses the threshold. 
This decision procedure runs in $\tilde{O}(n^3)$ time.

\paragraph*{Basic algorithm.} 
By Theorem~\ref{thm:R2events}, $\dmax$ is attained at one of $O(n^4)$ candidate values corresponding to intersections of difference curves and their local maxima. 
These candidates can be enumerated and sorted in $O(n^4\log n)$ time. 
Combined with binary search using our decision procedure, this yields an $O(n^4\log n + n^3\polylog(n)) = \tilde O(n^4)$ algorithm.

\paragraph*{Improved algorithm via band refinement.} 
By Theorem~\ref{thm:R2events}, there are $O(n^4)$ total candidates where $\dmax$ could be attained.
To achieve $\tilde{O}(n^3)$ time, we avoid enumerating all candidates.

We call a candidate an $\alpha$-candidate if it arises from a difference curve $\Delta_{\alpha,\sigma}$ for some $\sigma$ (when $\alpha \in K$), from $\Delta_{\tau,\alpha}$ for some $\tau$ (when $\alpha \in K'$), or from $|I_\alpha(\omega) - I_\beta(\omega)|/2$ for some $\beta$ in the same complex as $\alpha$.

We maintain a height band $(\lambda_1, \lambda_2]$ satisfying $\dmax > \lambda_1$ and $\dmax \leq \lambda_2$. 
We process simplices from both complexes in random order. 
For each simplex $\alpha$, we apply:
\begin{enumerate}[noitemsep, topsep=0pt]
    \item Existence predicate: Tests if any $\alpha$-candidate lies in $(\lambda_1, \lambda_2]$ in $O(n^2 \polylog(n))$ time. If no candidate exists in the band, we skip to the next simplex.
    \item Band-refinement predicate: If the existence test is positive, we enumerate all $O(n^3)$ $\alpha$-candidates in the band and refine via binary search in $O(n^3\polylog(n))$ time.
\end{enumerate}

The resulting total running time is $O(n \cdot n^2\polylog(n) + C \cdot n^3\polylog(n))$, where $C$ is the number of band-refinement calls. 
As we will show, processing the simplices in a random order ensures $\mathbb{E}[C] = O(\log n)$, yielding expected time $\tilde{O}(n^3)$.

\subsection{Decision Problem}
We now describe how to decide whether $\dmax \le \lambda$ for a given threshold $\lambda \ge 0$. 
The key idea is to sweep the direction $\omega$ around $S^1$ and check if the bottleneck distance ever exceeds $\lambda$.

By Lemma~\ref{lem:reduction}, at each direction $\omega$, the bottleneck distance $d_B$ equals the cost of a minimum-bottleneck matching in a bipartite graph $G(\omega)$ constructed from the two persistence diagrams $X=\Dgm(h_\omega^{K})$ and $Y=\Dgm(h_\omega^{K'})$ with diagonal copies.
Define the subgraph $G_\lambda(\omega)$ to contain only those edges of $G(\omega)$ whose weights are $\le \lambda$. 
Then $d_B(X,Y)\le \lambda$ if and only if $G_\lambda(\omega)$ admits a perfect matching.

Our strategy is to maintain a perfect matching in $G_\lambda(\omega)$ as $\omega$ sweeps from $0$ to $2\pi$. 
If at any point no perfect matching exists, we conclude $\dmax > \lambda$; otherwise, if the sweep completes with a valid matching at every direction, then $\dmax \le \lambda$.

\paragraph*{Sweep algorithm.}
We initialize at $\omega=0$ by computing a perfect matching in $G_\lambda(0)$ using the Hopcroft-Karp algorithm in $O(n^{2.5})$~\cite{HopcroftKarp1973}. 
If no such matching exists, we conclude $\dmax>\lambda$.

As $\omega$ increases from $0$ to $2\pi$, the bipartite graph $G_\lambda(\omega)$ changes only at discrete events:
\begin{itemize}[noitemsep]
\item Diagram changes: At intersections of insertion curves, points move to or from the diagonal in the persistence diagrams, updating the bipartite graph structure.
\item Threshold crossings: When a difference curve $\Delta_{\alpha,\sigma}(\omega)$ crosses the threshold $\lambda$, the corresponding edge enters or leaves $G_\lambda(\omega)$.
\end{itemize}

\paragraph*{Maintaining the matching.} 
We maintain a current matching $M$ throughout the sweep. At each event:
\begin{itemize}[noitemsep, topsep=0pt]
\item If a new edge enters $G_\lambda(\omega)$ (weight decreases below $\lambda$), $M$ remains valid.
\item If an edge leaves $G_\lambda(\omega)$ (weight exceeds $\lambda$) and is not in $M$, then $M$ remains valid.
\item If an edge $(u,v) \in M$ becomes invalid, Hall's theorem guarantees that a new perfect matching exists if and only if there is an augmenting path from the newly exposed vertices. We search for such a path in $O(n \log n)$ time using the geometric structure~\cite{Efrat2001}. 
If found, we augment along this path to restore a perfect matching. 
If no augmenting path exists, then $G_\lambda(\omega)$ has no perfect matching, so $d_B > \lambda$ at this $\omega$, implying $\dmax > \lambda$.
\end{itemize}

At diagram change events (insertion curve intersections), we update the persistence diagrams using the vineyard algorithm~\cite{CohenSteiner2006Vineyard}. 
When two insertion curves $I_\alpha$ and $I_\beta$ from the same complex intersect, the relative order of exactly two simplices changes in the filtration. 
Following Bjerkevik and Kerber~\cite{Bjerkevik2023}, this corresponds to a single transposition in the barcode pairing, which can be updated in $O(n)$ time. 

If the sweep completes with a valid perfect matching at every $\omega$, then $\dmax \le \lambda$.

\begin{lemma}
\label{lem:decision-complexity}
The decision procedure runs in $\tilde{O}(n^3)$ time.
\end{lemma}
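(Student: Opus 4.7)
The plan is to decompose the running time into (i) an initial matching computation at $\omega = 0$, (ii) enumeration and sorting of all events encountered during the sweep, and (iii) per-event update cost, then sum these contributions.

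First I would handle the initialization. Computing $G_\lambda(0)$ takes $O(n^2)$ time since the graph has $O(n)$ vertices on each side and $O(n^2)$ edges; finding a perfect matching (or certifying none exists) using Hopcroft--Karp then runs in $O(n^{2.5})$ time. Next I would bound the number of events: by Lemma~\ref{lem:event-bound}, there are $O(n^2)$ insertion-curve intersections and $O(n^2)$ threshold crossings of difference curves, for a total of $O(n^2)$ events. Each event corresponds to solving a constant-degree trigonometric equation from a pair of curves, so all events can be computed and sorted by $\omega$ in $O(n^2 \log n)$ time.

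The main step is to argue that each event is processed in $\tilde O(n)$ time. At an insertion-curve intersection the vineyard update swaps two adjacent simplices in the filtration and, as in Bjerkevik--Kerber~\cite{Bjerkevik2023}, this corresponds to a single transposition in the barcode that can be propagated to the bipartite graph (updating at most $O(n)$ incident edge weights) in $O(n)$ time. At a threshold crossing only one edge $(u,v)$ changes status; if it enters $G_\lambda(\omega)$, or leaves but is not in $M$, nothing needs to be done. If $(u,v) \in M$ leaves, then removing it exposes exactly two vertices, and restoring a perfect matching reduces to a single augmenting-path search in the bipartite graph with $L_\infty$ weights; by the geometric bipartite matching data structure of Efrat, Itai, and Katz~\cite{Efrat2001}, this runs in $O(n \log n)$ time. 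Summing over all $O(n^2)$ events gives $O(n^2) \cdot O(n \log n) = \tilde O(n^3)$, which dominates the initialization and sorting costs.

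The main obstacle I expect is justifying the per-event update costs cleanly. In particular, one must verify that after a diagram change the matching $M$ remains a valid perfect matching in the new graph (possibly after relabelling the endpoints corresponding to the transposed simplex pair), so that no global re-computation is needed; and that at a threshold-crossing event at most one augmenting path suffices, with the correctness of ``no augmenting path $\Rightarrow$ $\dmax > \lambda$'' following from Hall's theorem and Lemma~\ref{lem:reduction}. Once these are established, the time bound follows by the additive accounting above.
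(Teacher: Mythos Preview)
Your proof is correct and follows essentially the same approach as the paper: bound the number of events by $O(n^2)$ via Lemma~\ref{lem:event-bound}, charge $O(n)$ per vineyard update and $O(n\log n)$ per augmenting-path search, and multiply. You simply supply more detail on initialization, event sorting, and the correctness caveats, all of which the paper either omits (as dominated terms) or handles in the prose preceding the lemma.
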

\begin{proof}
At each event, we either update the diagram structure in $O(n)$ time (vineyard update for insertion curve intersections) or search for a single augmenting path in $O(n\log n)$ time (for threshold crossings). 
With $O(n^2)$ total events from Lemma~\ref{lem:event-bound}, the total time is $\tilde{O}(n^3)$.
\end{proof}

\subsection{Candidate Pruning via Band Refinement}

Recall that we have $O(n^4)$ total candidates where $\dmax$ may be attained. To find $\dmax$ efficiently, we maintain a height band $(\lambda_1, \lambda_2]$ containing $\dmax$ and iteratively shrink it by processing simplices from both complexes in random order.

We distinguish between two types of intersections between difference curves. \emph{Intra-intersections} occur when two difference curves share a common simplex: either $\Delta_{\alpha,\sigma_1}$ and $\Delta_{\alpha,\sigma_2}$ sharing simplex $\alpha$, or $\Delta_{\beta_1,\tau}$ and $\Delta_{\beta_2,\tau}$ sharing simplex $\tau$. 
\emph{Inter-intersections} occur when $\Delta_{\alpha,\sigma}$ and $\Delta_{\beta,\tau}$ meet with all four simplices distinct: $\alpha \neq \beta$ and $\sigma \neq \tau$.

\paragraph*{Pre-refinement.}
Before processing individual simplices, we refine the band to exclude certain events from its interior: let $S$ be the set containing all local maxima of difference curves $\Delta_{\alpha,\sigma}$ for all pairs (including both different-complex and same-complex pairs), all intra-$\alpha$ intersections where $\Delta_{\alpha,\sigma_1}(\omega) = \Delta_{\alpha,\sigma_2}(\omega)$ for some $\omega \in \mathbb{S}^1$ and distinct $\sigma_1, \sigma_2$, and all intra-$\tau$ intersections where $\Delta_{\beta_1,\tau}(\omega) = \Delta_{\beta_2,\tau}(\omega)$ for distinct $\beta_1, \beta_2$.

Since there are $O(n^2)$ difference curves, each with $O(1)$ local maxima, we have $O(n^2)$ maxima total. For intra-intersections, each of the $O(n)$ simplices in $K$ contributes $\binom{O(n)}{2} = O(n^2)$ pairs of curves that could intersect, with $O(1)$ intersection points each, giving $O(n^3)$ intra-$\alpha$ intersections. Similarly for intra-$\tau$ intersections. Thus $|S| = O(n^3)$.

We binary search with the decision procedure to find the largest value in $S$ below $\dmax$ and the smallest at or above it, setting these as $\lambda_1$ and $\lambda_2$.
The resulting band $(\lambda_1, \lambda_2]$ contains no intra-events in its interior.
This takes $O(n^3 \polylog n)$ time.

\subsubsection{Predicate 1: Existence Test}

After pre-refinement, we process simplices to detect inter-intersections.

\begin{lemma}
\label{lem:existence_test}
For fixed $\alpha \in K$, determining whether any inter-intersection involving $\alpha$ lies in $(\lambda_1, \lambda_2]$ takes $O(n^2\log n)$ time.
\end{lemma}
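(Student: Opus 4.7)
The plan is to group the non-$\alpha$ difference curves by a common simplex, so that the existence test decomposes into $O(n)$ independent red-blue intersection detections, each solvable in $O(n \log n)$ time by a standard sweep.

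For each simplex $\mu \in (K \cup K') \setminus \{\alpha\}$, let $\mathcal{R}$ be the red family of $O(n)$ $\alpha$-curves $\Delta_{\alpha,\sigma}$ (over all $\sigma \neq \alpha$), and let $\mathcal{B}_\mu$ be the blue family of $O(n)$ curves $\Delta_{\mu,\nu}$ with $\nu \notin \{\alpha,\mu\}$. Any inter-intersection involving $\alpha$, written $\Delta_{\alpha,\sigma} \cap \Delta_{\beta,\tau}$ with $\{\alpha,\sigma\} \cap \{\beta,\tau\} = \emptyset$, is exposed as a red-blue crossing once we take $\mu = \beta$, so iterating $\mu$ over $(K \cup K') \setminus \{\alpha\}$ covers every candidate. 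Within one iteration, the red arcs are pairwise non-crossing in the band because intra-$\alpha$ intersections have been removed during pre-refinement, and the blue arcs are likewise pairwise non-crossing (intra-$\mu$ intersections removed). Moreover, a red-blue pair that shares a simplex is itself an intra-pair, so its intersection also lies outside the band; hence every red-blue crossing detected inside the band is a genuine $\alpha$-inter-intersection.

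I would then detect a red-blue crossing by a Bentley-Ottmann style sweep along $\omega \in \mathbb{S}^1$: first sort the $O(n)$ entry and exit events of the arcs in $\mathcal{R} \cup \mathcal{B}_\mu$ with respect to the horizontal lines $y = \lambda_1$ and $y = \lambda_2$ in $O(n \log n)$ time, then sweep while maintaining a balanced search tree of the arcs currently inside the band, keyed by $y$-value. At each insertion or deletion I would test the newly-adjacent pair for a future crossing inside the band in $O(1)$ time (each pair is piecewise trigonometric with $O(1)$ pieces) and enqueue such crossings as events. Because neither color class crosses itself inside the band, the first crossing event ever dequeued is necessarily a red-blue one, and the iteration terminates at that point. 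A single iteration therefore costs $O(n \log n)$, and the $O(n)$ iterations combine to the advertised $O(n^2 \log n)$ bound.

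The main technical subtlety I anticipate is the cyclic structure of $\mathbb{S}^1$, along with arcs that never exit the band throughout the sweep. These can be handled by cutting the circle at a generic direction $\omega_0$, initializing the search tree with the arcs alive at $\omega_0$, and then processing events cyclically back to $\omega_0$; degeneracies from simplices sharing vertices (which cause partial coincidence of difference curves, as already noted in the paper) are absorbed by treating coinciding arcs as a single representative in the search tree.
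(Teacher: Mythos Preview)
Your proposal is correct and follows essentially the same approach as the paper: iterate over a second fixed simplex, form the red family of $\alpha$-curves and the blue family of curves through that simplex (each pairwise non-crossing inside the band thanks to pre-refinement), and run a Bentley--Ottmann sweep to detect the first red--blue crossing in $O(n\log n)$ per iteration. The only cosmetic difference is that you iterate the second simplex over all of $(K\cup K')\setminus\{\alpha\}$ whereas the paper iterates only over $\tau\in K'$; this does not change the $O(n^2\log n)$ bound.
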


\begin{proof}
We seek intersections of the form $\Delta_{\alpha,\sigma} \cap \Delta_{\beta,\tau}$ where $\beta \neq \alpha$. For each $\tau \in K'$, we define the red curves $R_\alpha = \{\Delta_{\alpha,\sigma} : \sigma\neq\alpha\}$ as all curves involving simplex $\alpha$, and the blue curves $B_\tau = \{\Delta_{\beta,\tau} : \beta\neq\tau\}$ as all curves involving simplex $\tau$.
(See Figure~\ref{fig:sector} for the geometric setting; the red and blue curves are
real-valued functions on $S^1$, and we detect their intersection as real-valued crossing
events while sweeping around $S^1$.)

The key insight from pre-refinement is that within the band $(\lambda_1, \lambda_2]$, no two red curves can intersect (these would be intra-$\alpha$ intersections), and no two blue curves can intersect (these would be intra-$\tau$ intersections). 

Each difference curve $\Delta_{\alpha,\sigma}(\omega)$ consists of $O(1)$ trigonometric pieces, switching between different expressions at directions where the active vertex changes. 
Since simplices in $\mathbb{R}^2$ have at most 3 vertices, there are $O(1)$ such transition points. 
When we clip to the band $[\lambda_1, \lambda_2]$, each curve crosses the boundary at $O(1)$ points, subdividing into $O(1)$ monotone arcs within the band.

We apply the standard Bentley-Ottmann sweep-line algorithm to all $O(n)$ arcs. 
Since red arcs do not intersect each other and blue arcs do not intersect each other (by pre-refinement), any intersection must be between a red and blue arc. We terminate immediately upon detecting the first such intersection. This takes $O(n \log n)$ time per choice of $\tau$.
Testing all $O(n)$ choices of $\tau \in K'$ takes $O(n^2 \log n)$ total time for fixed $\alpha$.
\end{proof}

\subsubsection{Predicate 2: Band Refinement}

When the existence test detects an inter-intersection, we must enumerate all candidates involving $\alpha$ to refine the band.

\begin{lemma}
\label{lem:refinement}
When an inter-intersection involving $\alpha$ is detected, we can enumerate all $O(n^3)$ $\alpha$-candidates in $(\lambda_1, \lambda_2]$ and compute a refined band with no $\alpha$-candidates in its interior in $O(n^3 \polylog n)$ time.
\end{lemma}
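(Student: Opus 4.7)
The plan is to enumerate all $\alpha$-candidates in the band explicitly, sort them, and then use binary search with the decision procedure of Lemma~\ref{lem:decision-complexity} to isolate the sub-band containing $\dmax$. First I would fix $\alpha$ and list the set $\mathcal{C}_\alpha$ of difference curves that depend on $\alpha$: these are $\Delta_{\alpha,\sigma}$ for $\sigma$ in the opposite complex, together with $\Delta_{\alpha,\beta}$ for $\beta$ in the same complex (or the symmetric family if $\alpha \in K'$). There are only $O(n)$ such curves. Since the pre-refinement stage guarantees that the interior of $(\lambda_1,\lambda_2]$ contains no local maxima and no intra-intersections, by Theorem~\ref{thm:R2events} the only $\alpha$-candidates that can lie inside the band are inter-intersection points between a curve in $\mathcal{C}_\alpha$ and a difference curve that does not involve $\alpha$.

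Next I would enumerate these intersections. For each of the $O(n)$ curves $\delta \in \mathcal{C}_\alpha$ I iterate over all $O(n^2)$ other difference curves $\Delta_{\beta,\tau}$ and compute their pairwise intersections. Each pair consists of two piecewise-trigonometric curves with $O(1)$ pieces (since each simplex in $\mathbb{R}^2$ has at most three vertices), so their intersection set is computable in $O(1)$ time and contains $O(1)$ points. This gives $O(n^3)$ candidate heights in $O(n^3)$ time. I then discard those outside $(\lambda_1,\lambda_2]$, obtaining a list $c_1 \le c_2 \le \dots \le c_N$ with $N = O(n^3)$, sorted in $O(n^3 \log n)$ time.

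To refine the band, I binary-search this sorted list: at each step I pick the median candidate $c$ and call the decision procedure of Lemma~\ref{lem:decision-complexity} to test whether $\dmax \le c$. After $O(\log N) = O(\log n)$ calls I have located indices $i^\ast$ with $\dmax > c_{i^\ast}$ and $\dmax \le c_{i^\ast+1}$; the refined band is $(\max(\lambda_1,c_{i^\ast}),\, \min(\lambda_2,c_{i^\ast+1})]$. Because every $\alpha$-candidate in $(\lambda_1,\lambda_2]$ appears in our sorted list, the interior of the new band contains no $\alpha$-candidate, as required. Each decision call costs $\tilde{O}(n^3)$, so the binary search contributes $\tilde{O}(n^3)$, matching the enumeration cost.

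The main subtlety is making sure the enumeration is genuinely complete: one must confirm that the only $\alpha$-candidates surviving the pre-refinement are inter-intersections with a curve in $\mathcal{C}_\alpha$ on one side, so that iterating over the $O(n)$ curves of $\mathcal{C}_\alpha$ and pairing each with every non-$\alpha$ difference curve exhausts the possibilities. Given the classification in Theorem~\ref{thm:R2events} and the structure of $S$ used in pre-refinement, this is straightforward, and the overall running time is $O(n^3 \polylog n)$ as claimed.
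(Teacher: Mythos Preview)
Your proposal is correct and follows the same approach as the paper: after noting that pre-refinement leaves only inter-intersections in the band's interior, you enumerate the $O(n^3)$ intersections between the $O(n)$ curves in $\mathcal{C}_\alpha$ and the $O(n^2)$ remaining difference curves, sort, and binary-search with the decision procedure. The paper's own proof is terser (it simply says ``choosing $\sigma,\beta,\tau$ with $O(n)$ choices each'') but the strategy and cost analysis are identical.
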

\begin{proof}
After pre-refinement, only inter-intersections remain in the band's interior. For simplex $\alpha$, there are $O(n^3)$ such intersections (choosing $\sigma, \beta, \tau$ with $O(n)$ choices each). We enumerate these candidates, sort them, and refine the band using binary search with the decision procedure, taking $O(n^3\polylog n)$ time.
\end{proof}

\begin{theorem}
\label{thm:refinement}
After global pre-refinement, processing each simplex $\alpha$ takes $O(n^2 \log n + \mathbbm{1}_\alpha n^3 \polylog n)$ total time, where $\mathbbm{1}_\alpha$ is the indicator function of the existence test.
\end{theorem}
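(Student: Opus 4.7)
The plan is to compose Lemma~\ref{lem:existence_test} and Lemma~\ref{lem:refinement} sequentially for each simplex $\alpha$. First, I would invoke the existence test of Lemma~\ref{lem:existence_test}, which costs $O(n^2 \log n)$ regardless of outcome and decides whether any $\alpha$-candidate lies in the current band $(\lambda_1, \lambda_2]$. If the test is negative ($\mathbbm{1}_\alpha = 0$), processing of $\alpha$ terminates and the simplex contributes only the existence-test cost. If the test is positive ($\mathbbm{1}_\alpha = 1$), I would then invoke the refinement procedure of Lemma~\ref{lem:refinement}, paying the additional $O(n^3 \polylog n)$. Summing the two branches gives the claimed $O(n^2 \log n + \mathbbm{1}_\alpha\, n^3\,\polylog n)$ bound.

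The key correctness point is that the ``skip'' branch is safe. Global pre-refinement guarantees that the interior of the current band contains no local maxima of difference curves and no intra-intersections. Hence, by Theorem~\ref{thm:R2events}, any $\alpha$-candidate still sitting in $(\lambda_1,\lambda_2]$ must be an inter-intersection of the form $\Delta_{\alpha,\sigma}\cap\Delta_{\beta,\tau}$ with $\beta\neq\alpha$ and $\sigma\neq\tau$, which is precisely what the existence test detects. A negative outcome therefore certifies that no $\alpha$-candidate lies in the band interior, so no refinement is needed.

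A secondary invariant to track is that the band $(\lambda_1,\lambda_2]$ shrinks monotonically as other simplices are processed. Each shrinkage can only remove points from the band interior, so the pre-refinement property (absence of local maxima and intra-intersections) is preserved automatically. Moreover, once Lemma~\ref{lem:refinement} is applied to $\alpha$, no $\alpha$-candidates remain in the updated band interior, which means $\alpha$ never needs to be revisited.

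I do not anticipate a substantive obstacle here: the theorem is essentially a bookkeeping composition of the two predicates in the right order to isolate the expensive step behind the indicator $\mathbbm{1}_\alpha$. The genuine technical content has already been absorbed into Lemma~\ref{lem:existence_test} (the red/blue sweep-line argument exploiting the intra-intersection-free band) and Lemma~\ref{lem:refinement} (enumerating the $O(n^3)$ inter-candidates and binary-searching with the decision procedure). The only care required is to articulate the invariant that pre-refinement combined with prior per-simplex refinements maintains the hypothesis of Lemma~\ref{lem:existence_test} throughout the run.
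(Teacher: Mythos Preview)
Your proposal is correct and follows exactly the paper's approach: the paper's own proof is the single sentence ``This follows directly from Lemma~\ref{lem:existence_test} and~\ref{lem:refinement},'' and your write-up simply unpacks that composition, adding the (sound) observation that pre-refinement ensures any remaining $\alpha$-candidate in the band is an inter-intersection, so a negative existence test safely permits skipping. Your additional remarks about band monotonicity and non-revisitation are correct and helpful context, though not strictly required for the theorem as stated.
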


\begin{proof}
This follows directly from Lemma~\ref{lem:existence_test} and~\ref{lem:refinement}.
\end{proof}

\subsubsection{Randomized Analysis}

\begin{theorem}
\label{thm:expected}
Processing simplices from both $K$ and $K'$ in uniformly random order yields $\mathbb{E}[C] = O(\log n)$ refinements, where $C$ counts positive existence tests.
\end{theorem}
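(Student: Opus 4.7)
The plan is a backwards analysis that encodes the dynamics of the band into two ``running extremum'' sequences, then invokes the classical $H_N = O(\log N)$ bound on the expected number of left-to-right extrema in a uniformly random permutation.

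First, for every simplex $\alpha \in K \cup K'$ I would define two input-dependent quantities: $u(\alpha)$, the smallest inter-intersection $\alpha$-candidate strictly above $\dmax$, and $\ell(\alpha)$, the largest such candidate strictly below $\dmax$, with the conventions $+\infty$ and $-\infty$ when no candidate exists. These values do not depend on the random processing order --- they are fixed functions of $K$ and $K'$.

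Next I would use Lemma~\ref{lem:refinement} to describe the band dynamics. Refining by $\alpha_k$ removes exactly the $\alpha_k$-candidates from the interior of the current band, so the endpoints evolve as running extrema:
\[
\lambda_2^{(k)} = \min\bigl(\lambda_2^{(k-1)}, u(\alpha_k)\bigr), \qquad
\lambda_1^{(k)} = \max\bigl(\lambda_1^{(k-1)}, \ell(\alpha_k)\bigr).
\]
Because pre-refinement has already excluded all local-maximum events and all intra-intersections from the interior, every $\alpha_k$-candidate in the current band is an inter-intersection lying strictly above or strictly below $\dmax$. Consequently the existence test at step $k$ is positive if and only if $u(\alpha_k) \le \lambda_2^{(k-1)}$ or $\ell(\alpha_k) > \lambda_1^{(k-1)}$, i.e.\ exactly when $u(\alpha_k)$ is a new left-to-right minimum of the $u$-sequence or $\ell(\alpha_k)$ is a new left-to-right maximum of the $\ell$-sequence.

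Given this equivalence, the bound is immediate. A uniformly random processing order of the $N \le 2n$ simplices induces, under any fixed tie-breaking rule, a uniformly random permutation of the $u$-value ranks, so the expected number of left-to-right minima is at most $H_N = O(\log n)$; symmetrically for the $\ell$-maxima. A union bound over the two types of updates yields $\mathbb{E}[C] \le 2 H_N = O(\log n)$.

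The main obstacle I anticipate is the equivalence step: one must verify that pre-refinement truly eliminates every non-inter-intersection candidate from the band's interior, so that a positive existence test indeed corresponds to a new extremum in one of the two sequences. A secondary bookkeeping issue is that each inter-intersection involves four distinct simplices and so contributes to four different $u$- or $\ell$-values; a consistent tie-breaking rule is needed to ensure each candidate is charged to only one simplex in the new-extrema count.
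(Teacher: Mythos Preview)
Your proposal is correct and follows essentially the same route as the paper: define for each simplex the closest $\alpha$-candidate above and below $\dmax$, observe that a positive existence test forces one of these to be a new running extremum in the randomly ordered sequence, and bound the expected number of such extrema by the harmonic number $H_N = O(\log n)$. The paper's $x_\alpha$ and $y_\alpha$ are your $u(\alpha)$ and $\ell(\alpha)$ up to a convention at the value $\dmax$ itself (the paper uses $\ge \dmax$ rather than $> \dmax$, which cleanly handles the simplex realizing the optimum). Your two flagged obstacles are appropriate: the paper also relies implicitly on pre-refinement having cleared all non-inter events, and it sidesteps the tie issue by assuming distinct values; your remark about charging a shared inter-intersection to one simplex is unnecessary, though, since the extremum count is over the per-simplex values $u(\alpha)$, not over the candidates themselves.
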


\begin{proof}
For each simplex $\alpha$, define $x_\alpha$ as the smallest $\alpha$-candidate that is $\geq \dmax$ and $y_\alpha$ as the largest $\alpha$-candidate that is $< \dmax$. Since $\dmax$ is realized by at least one candidate, for some simplex $\alpha^*$ we have $x_{\alpha^*} = \dmax$. 
Every interval $(y_\alpha, x_\alpha]$ contains $\dmax$, so our maintained band always has the form $(y_{\alpha_i}, x_{\alpha_j}]$ for some processed simplices $\alpha_i, \alpha_j$.

When processing simplices in random order, we obtain random permutations of the sequences $\{x_\alpha\}$ and $\{y_\alpha\}$. 
Simplex $\alpha$ passes the existence test and the refinement is triggered if and only if there exists an $\alpha$-candidate in the current band.
This occurs when $x_\alpha$ is smaller than all previously seen $x_{\alpha'}$ values (making it a left-to-right minimum), or $y_\alpha$ is larger than all previously seen $y_{\alpha'}$ values (making it a left-to-right maximum).

Indeed, if $x_\alpha$ is not a minimum, then some previously processed $\alpha'$ satisfies $x_{\alpha'} \le x_\alpha$, so the band's upper bound is already at most $x_\alpha$, excluding all $\alpha$-candidates with value $\ge x_\alpha$.  
Likewise, if $y_\alpha$ is not a maximum, the band's lower bound already exceeds $y_\alpha$, excluding all $\alpha$-candidates with value $\le y_\alpha$.  
Thus, when neither $x_\alpha$ nor $y_\alpha$ improves the band boundaries, every $\alpha$-candidate lies outside $(y_\alpha, x_\alpha)$, and none can fall in the current band.

By a classical result on random permutations~\cite{Knuth1998}: in a uniformly random ordering of $n$ distinct values, the $i$-th element is a left-to-right minimum with probability $1/i$ (being the smallest of the first $i$ elements). By linearity of expectation, the expected number of left-to-right minima is $H_n = \sum_{i=1}^n \frac{1}{i} = O(\log n)$. The same bound holds for maxima. 
Therefore, the expected number of refinements is at most the expected number of minima in $\{x_\alpha\}$ plus the expected number of maxima in $\{y_\alpha\}$, giving $\mathbb{E}[C] \leq O(\log n) + O(\log n) = O(\log n)$.
\end{proof}

\begin{corollary}[Total complexity in 2D]
\label{cor:total}
The candidate pruning algorithm runs in expected time
\[
O(n \cdot n^2\log n + \log n \cdot n^3\polylog(n)) = \tilde{O}(n^3).
\]
\end{corollary}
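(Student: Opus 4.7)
The plan is to assemble the corollary by adding up the costs of the three phases of the algorithm (pre-refinement, existence tests, band refinements) and then invoking the expected-count bound from Theorem~\ref{thm:expected}.

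First I would account for the global pre-refinement, which runs binary search with the decision procedure over the $O(n^3)$ elements of the set $S$ (local maxima and intra-intersections). By Lemma~\ref{lem:decision-complexity} each decision call costs $\tilde{O}(n^3)$, and sorting $S$ plus $O(\log n)$ binary-search calls gives a total of $O(n^3 \polylog n)$ for this phase. This is absorbed into the final $\tilde{O}(n^3)$ bound.

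Next I would process the $2n$ simplices of $K\cup K'$ in uniformly random order. Theorem~\ref{thm:refinement} tells us that, for each simplex $\alpha$, we spend $O(n^2 \log n)$ on the existence predicate and an additional $O(n^3 \polylog n)$ only when the predicate fires. Summing the existence tests across all $O(n)$ simplices contributes $O(n^3 \log n)$. The remaining contribution is $C\cdot O(n^3 \polylog n)$, where $C$ is the total number of positive existence tests encountered during the random order.

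Finally I would plug in the randomization bound: Theorem~\ref{thm:expected} gives $\mathbb{E}[C] = O(\log n)$, so by linearity of expectation the refinement phase costs $O(\log n \cdot n^3 \polylog n) = \tilde{O}(n^3)$ in expectation. Adding the three contributions together yields
\[
O(n^3 \polylog n) + O(n \cdot n^2 \log n) + O(\log n \cdot n^3 \polylog n) = \tilde{O}(n^3),
\]
which is the claimed bound. There is no real obstacle here; the corollary is a straightforward bookkeeping exercise that simply collects Lemmas~\ref{lem:decision-complexity}, \ref{lem:existence_test}, and \ref{lem:refinement} together with Theorem~\ref{thm:expected}. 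The only subtlety worth mentioning explicitly is that the expectation is taken over the random processing order and the decision oracle calls are deterministic, so linearity of expectation applies cleanly without any dependence between the event counts and the per-call costs.
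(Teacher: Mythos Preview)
Your proposal is correct and follows essentially the same approach as the paper's proof, which simply tallies the $O(n)$ existence tests at $O(n^2\log n)$ each against the expected $O(\log n)$ refinements at $O(n^3\polylog n)$ each via Theorem~\ref{thm:expected}. If anything, you are more thorough: you explicitly account for the pre-refinement phase and spell out the linearity-of-expectation argument, both of which the paper leaves implicit.
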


\begin{proof}
We process $O(n)$ simplices from both complexes. Each requires an $O(n^2\log n)$ existence test. 
In expectation, only $O(\log n)$ trigger the $O(n^3\polylog(n))$ band refinement. 
The refinement cost dominates, giving $\tilde{O}(n^3)$ expected time.
\end{proof}

\section{Computing $\dint$ in $\R^2$}
\label{Sec:2Dint}
Our framework also improves the computation of the integral bottleneck distance $d_1$
from the previous $\tilde{O}(n^6)$ bound~\cite{munch2025kinetic} to $\tilde{O}(n^5)$.

The integral requires maintaining the bottleneck matching as $\omega$ sweeps around
$\mathbb{S}^1$.
As $\omega$ varies, the bottleneck distance $H(\omega)$ changes only at the $O(n^4)$
difference curve intersection events identified in Section~\ref{Sec:2D}.
At each event, we update the matching in $O(n\log n)$ time via an augmenting path search to
identify which difference curve realizes the new bottleneck.
Between consecutive events at angles $\omega_i$ and $\omega_{i+1}$, the bottleneck distance
is given by a single difference curve that can be integrated analytically.

Thus the integral distance can be computed in $\tilde{O}(n^5)$ total time: $O(n^4)$ events,
each requiring $O(n\log n)$ processing.
This improves upon the hourglass data structure approach of~\cite{munch2025kinetic}, which
maintained two kinetic heaps and required $\tilde{O}(n^6)$ time.

\section{Computing $\dmax$ in $\R^3$}
\label{Sec:3D}
We now consider embedded complexes in $\mathbb{R}^3$. 
Directions range over the sphere $\mathbb{S}^2$, difference curves become surfaces, and intersections occur along curves rather than at points. 
The algorithm follows the same framework as 2D, but requires traversing an arrangement on $\mathbb{S}^2$ via an Euler tour rather than sweeping a circle. 
This yields expected complexity $\tilde{O}(n^5)$.

\subsection{Events}
Let $K, K' \subset \mathbb{R}^3$ be finite simplicial complexes with at most $n$ simplices each. 
For each simplex $\alpha \in K \cup K'$, $\omega \in S^2$, the \emph{insertion surface} is:
\[
I_\alpha : S^2 \to \mathbb{R}, \quad I_\alpha(\omega) := \max_{v \in \alpha} \langle v, \omega \rangle
\]

Each $I_\alpha$ is the upper envelope of at most 4 trigonometric forms on $S^2$. 
The loci where the maximizing vertex changes are great circles $\{\omega \in S^2 : \langle v_i - v_j, \omega \rangle = 0\}$.

For distinct simplices $\alpha, \sigma$, the \emph{difference surface} is a
real-valued function on $\mathbb{S}^2$:
\[
\Delta_{\alpha,\sigma}(\omega) := \begin{cases}
|I_\alpha(\omega) - I_\sigma(\omega)| & \text{if } \alpha, \sigma \text{ are from different complexes} \\
\frac{1}{2}|I_\alpha(\omega) - I_\sigma(\omega)| & \text{if } \alpha, \sigma \text{ are from the same complex}
\end{cases}
\]

Within each region where the active vertices are fixed, $\Delta_{\alpha,\sigma}$ simplifies to $|\langle u - w, \omega \rangle|$ for active vertices $u \in \alpha$ and $w \in \sigma$. 
Two difference surfaces $\Delta_{\alpha,\sigma}$ and $\Delta_{\beta,\tau}$ intersect along curves on $\mathbb{S}^2$ where $\Delta_{\alpha,\sigma}(\omega) = \Delta_{\beta,\tau}(\omega)$.

\begin{lemma}
\label{lem:3d-well-behaved}
For simplicial complexes in $\mathbb{R}^3$:
\begin{enumerate}[noitemsep]
\item Each pair of insertion surfaces intersects along $O(1)$ great circle arcs on $\mathbb{S}^2$.
\item Each pair of difference surfaces intersects along $O(1)$ great circle arcs on $\mathbb{S}^2$.
\item Each intersection curve between difference surfaces has $O(1)$ local maxima.
\item Each difference surface crosses any threshold $\lambda$ along $O(1)$ curves on $\sS^2$.
\end{enumerate}
\end{lemma}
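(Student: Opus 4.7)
The plan is to treat all four claims as consequences of the same structural observation: on $\sS^2$, each simplex $\alpha\in K\cup K'$ partitions the sphere into $O(1)$ spherical regions on which a single vertex is \emph{active}, because a simplex in $\R^3$ has at most 4 vertices and $I_\alpha$ is the upper envelope of at most 4 linear forms restricted to $\sS^2$. The region boundaries are great circles $\{\omega:\langle v_i-v_j,\omega\rangle=0\}$. Overlaying the partitions of two (resp.\ four) simplices yields a common refinement of $\sS^2$ of combinatorial complexity $O(1)$, on each cell of which all relevant active vertices are fixed. Inside such a cell, every $I_\alpha$ becomes a single linear form and every $\Delta_{\alpha,\sigma}$ becomes $c\,|\langle u-w,\omega\rangle|$ for $c\in\{1/2,1\}$.

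For (1), within a cell where $v_i$ is active for $\alpha$ and $w_j$ for $\beta$, the equation $I_\alpha(\omega)=I_\beta(\omega)$ reduces to $\langle v_i-w_j,\omega\rangle=0$, the intersection of $\sS^2$ with a plane through the origin, i.e.\ a great circle. Restricting to the cell and taking the union over $O(1)$ cells gives $O(1)$ great circle arcs. For (2), within a common-refinement cell with fixed active vertices, the equation $\Delta_{\alpha,\sigma}(\omega)=\Delta_{\beta,\tau}(\omega)$ becomes $c_1|\langle u_1-u_2,\omega\rangle|=c_2|\langle u_3-u_4,\omega\rangle|$, which splits into the two linear equations $\langle c_1(u_1-u_2)\mp(c_2/c_1)(u_3-u_4),\omega\rangle=0$; each defines a great circle, so the intersection inside the cell is at most two great circle arcs, and $O(1)$ cells yield $O(1)$ arcs in total.

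For (4), within a cell where active vertices give $\Delta_{\alpha,\sigma}(\omega)=c\,|\langle u-w,\omega\rangle|$, the threshold equation $\Delta_{\alpha,\sigma}(\omega)=\lambda$ becomes $\langle u-w,\omega\rangle=\pm\lambda/c$, two affine planes not through the origin whose intersections with $\sS^2$ are small circles (possibly empty). Summing over $O(1)$ cells yields $O(1)$ level curves on $\sS^2$. For (3), an intersection curve from (2) is a union of $O(1)$ great circle arcs, and on each such arc the surface $\Delta_{\alpha,\sigma}$ equals $c\,|\langle u-w,\omega\rangle|$ for fixed $u,w$. Parameterizing the carrying great circle by an angle $\theta$, the restriction becomes $c\,|A\cos\theta+B\sin\theta|$ for constants $A,B$ determined by $u-w$ and the plane of the circle, a function with $O(1)$ local maxima on $[0,2\pi)$. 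Summing over the $O(1)$ arcs that make up the intersection curve gives the $O(1)$ bound.

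The main obstacle is not any single estimate — each reduces, after fixing active vertices, to an elementary linear or trigonometric calculation — but rather making precise the common refinement argument so that the bookkeeping of ``number of cells in which active vertices are fixed'' is genuinely $O(1)$ and does not hide factors growing with $n$. The key point to emphasize is that this refinement depends only on the at most $2+2$ (for (1) and (4)) or $4+4$ (for (2) and (3)) vertices involved in the simplices under consideration, so its complexity is an absolute constant independent of $|K|$ and $|K'|$; the global $O(n^k)$ bounds used later in the algorithm then arise only from iterating this local constant bound over pairs or quadruples of simplices.
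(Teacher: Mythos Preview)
Your proposal is correct and follows essentially the same approach as the paper: both arguments rest on the observation that each simplex in $\R^3$ has at most four vertices, so overlaying the active-vertex partitions of the (two or four) simplices involved gives an $O(1)$-cell common refinement of $\sS^2$, and inside each cell the relevant equations reduce to linear forms (great circles), affine level sets (small circles), or single sinusoids $|A\cos\theta+B\sin\theta|$. The only cosmetic differences are that you keep the $c\in\{1/2,1\}$ factors explicit and front-load the common-refinement bookkeeping, whereas the paper treats each part separately; note also the small slip in your part (2) where the coefficient should be $c_2$ rather than $c_2/c_1$.
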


The proof follows from the piecewise structure: since simplices in $\mathbb{R}^3$ have at most 4 vertices, each insertion surface has $O(1)$ regions of fixed active vertex. 
Within each region, the surfaces are trigonometric forms and their intersections are great circles. 
The detailed geometric analysis is in Appendix~\ref{app:3d-geometry}.

\begin{theorem}[Events for $\dmax$ in $\mathbb{R}^3$]
The direction $\omega^*$ attaining $\dmax$ occurs at one of:
\begin{enumerate}[noitemsep]
\item A local maximum of a difference surface $\Delta_{\alpha,\sigma}$
\item A local maximum along an intersection curve on $\mathbb{S}^2$ where two difference surfaces meet
\item A triple intersection point where three difference surfaces meet
\end{enumerate}
\end{theorem}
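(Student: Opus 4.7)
The plan is to adapt the proof of Theorem~\ref{thm:R2events} to the three-dimensional setting, using continuity of $H$ from Lemma~\ref{lem:continuity} together with the geometric facts collected in Lemma~\ref{lem:3d-well-behaved}. A maximizer $\omega^*$ of $H$ on $\mathbb{S}^2$ exists by compactness, and at $\omega^*$ the bottleneck distance is realized by at least one difference surface, say $\Delta_{\alpha_0,\sigma_0}(\omega^*)=\dmax$. I would introduce a local realizer set $\mathcal{R}$ consisting of pairs $(\alpha,\sigma)$ with $\Delta_{\alpha,\sigma}(\omega^*)=\dmax$ that agree with $H$ on a sequence of directions converging to $\omega^*$, and proceed by case analysis on $|\mathcal{R}|$. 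Exactly as in the 2D argument, the extra ingredient here is that two surfaces in $\mathbb{S}^2$ generically meet along a curve rather than at isolated points, which is what forces the extra case~(2).

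In the first case, a single surface $\Delta_{\alpha,\sigma}\in\mathcal{R}$ realizes $H$ on a full neighborhood of $\omega^*$, so $H\equiv\Delta_{\alpha,\sigma}$ there and $\omega^*$ is automatically a local maximum of $\Delta_{\alpha,\sigma}$; this gives case~(1). In the second case, exactly two surfaces $\Delta_1,\Delta_2\in\mathcal{R}$ realize $H$ locally. By Lemma~\ref{lem:3d-well-behaved}, the locus $\{\Delta_1=\Delta_2\}$ consists of $O(1)$ piecewise-smooth arcs on $\mathbb{S}^2$, and $\omega^*$ lies on one such arc $C$. Continuity together with the local realization property forces $H|_C=\Delta_1|_C=\Delta_2|_C$ near $\omega^*$, so global maximality of $H$ makes $\omega^*$ a local maximum of $H|_C$, giving case~(2). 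In the third case, three or more pairs in $\mathcal{R}$ tie at $\omega^*$; again by Lemma~\ref{lem:3d-well-behaved}, three distinct difference surfaces meet at isolated points of $\mathbb{S}^2$, so $\omega^*$ is a triple intersection point, giving case~(3).

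The main obstacle is the careful treatment of degenerate configurations: surfaces may coincide on 2-dimensional regions when simplices share vertices, intersect tangentially, or contain $\omega^*$ as an endpoint of their intersection arc. My plan is to handle these exactly as in the 2D case, carrying combinatorial simplex labels on every realizing surface (as in the remark following Theorem~\ref{thm:R2events} and Appendix~\ref{app:overlaps}) so that the realizer set $\mathcal{R}$ is well-defined even in the presence of overlap. A secondary subtlety is ensuring in case~(2) that $\omega^*$ is a true local maximum along the smooth portion of $C$ rather than a corner or boundary; this can be reduced to case~(3) by noting that corners of $C$ are exactly points where a third surface enters the realizer set.
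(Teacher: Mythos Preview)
Your proposal is correct and follows essentially the same approach as the paper: a case analysis on the number of difference surfaces realizing $H(\omega^*)=\dmax$, using continuity of $H$ to force $\omega^*$ to be a local maximum of the realizing surface (one realizer), a local maximum along the intersection curve (two realizers), or a triple intersection point (three or more). Your formalization via the realizer set $\mathcal{R}$ and the explicit reduction of intersection-curve corners to case~(3) add a bit more care about degeneracies than the paper's proof, but the underlying argument is the same.
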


\begin{proof}
The bottleneck distance function $H(\omega)$ is continuous on $S^2$. At any $\omega^* \in S^2$ where $d_\infty$ is attained, consider the difference surfaces achieving the maximum value $H(\omega^*)$.

If a unique surface $\Delta_{\alpha,\beta}$ achieves this maximum in a neighborhood of $\omega^*$, then by continuity, $\omega^*$ must be a local maximum of that surface (case 1). This covers both matched pairs and unmatched persistence.

If exactly two surfaces achieve the maximum at $\omega^*$ but no other surface does so nearby, then $\omega^*$ lies on their intersection curve on $\mathbb{S}^2$. Since $H$ is locally determined by these two surfaces along the curve, $\omega^*$ must be a local maximum along this curve (case 2).

If three or more surfaces achieve the maximum at $\omega^*$, then generically three surfaces meet at isolated points, giving case 3.
\end{proof}

\begin{corollary}
There are $O(n^6)$ candidate values where $\dmax$ may be attained.
\end{corollary}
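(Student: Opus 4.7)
The plan is to count each of the three event types from the theorem and observe that the triple intersections dominate. Throughout, I assume the structural bounds of Lemma~\ref{lem:3d-well-behaved}, which controls how difference surfaces can meet one another.

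First I would count type (1) candidates. There are $O(n^2)$ difference surfaces in total: $O(n^2)$ from pairs $(\alpha,\sigma) \in K \times K'$, plus $O(n^2)$ from same-complex pairs. Each surface is piecewise trigonometric over $O(1)$ spherical regions (one per choice of active vertices, and simplices in $\mathbb{R}^3$ have at most $4$ vertices), and within each region it is of the form $|\langle u-w,\omega\rangle|$, which contributes $O(1)$ local maxima on $\sS^2$. This gives $O(n^2)$ type-(1) candidates.

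Next I would count type (2) candidates. There are $\binom{O(n^2)}{2}=O(n^4)$ unordered pairs of difference surfaces. By Lemma~\ref{lem:3d-well-behaved}(2), each such pair meets along $O(1)$ intersection curves, and by part (3) each curve carries $O(1)$ local maxima of the restricted bottleneck-realizing function. Multiplying gives $O(n^4)$ type-(2) candidates.

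Finally I would count type (3) candidates. There are $\binom{O(n^2)}{3}=O(n^6)$ triples of difference surfaces. Generically, each triple $\{\Delta_{\alpha_1,\sigma_1},\Delta_{\alpha_2,\sigma_2},\Delta_{\alpha_3,\sigma_3}\}$ meets in $O(1)$ isolated points of $\sS^2$, since within a common active-vertex region the triple becomes the simultaneous solution of three equations of the form $|\langle u_i-w_i,\omega\rangle| = |\langle u_j-w_j,\omega\rangle|$ on $\sS^2$, which generically have $O(1)$ solutions. Summing all three contributions yields $O(n^2)+O(n^4)+O(n^6) = O(n^6)$. The only substantive obstacle is justifying the $O(1)$ bound per triple, which reduces to a finite-case trigonometric argument inside each active-vertex cell and is directly analogous to the pairwise bounds already established for Lemma~\ref{lem:3d-well-behaved}; I would defer the elementary algebraic-geometry details to an appendix in the style of Appendix~\ref{app:3d-geometry}.
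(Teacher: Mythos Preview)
Your proposal is correct and follows essentially the same approach as the paper: you count the three event types separately, obtaining $O(n^2)$, $O(n^4)$, and $O(n^6)$ respectively via the same $\binom{O(n^2)}{k}$ reasoning, and note that triples dominate. The paper's proof is terser but structurally identical; your added remarks about the piecewise trigonometric structure and the appeal to Lemma~\ref{lem:3d-well-behaved} only make explicit what the paper leaves implicit.
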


\begin{proof}
We count candidates by type. 
Type 1 has $O(n^2)$ difference surfaces (including both between-complex and within-complex pairs), each having $O(1)$ maxima, contributing $O(n^2)$ candidates. 
Type 2 consists of $\binom{O(n^2)}{2} = O(n^4)$ pairs with $O(1)$ maxima per intersection curve, giving $O(n^4)$ candidates. 
Type 3 is $\binom{O(n^2)}{3} = O(n^6)$ triple intersection points. 
This gives us the total of $O(n^6)$ candidates.
\end{proof}

\subsection{Overview of Algorithm}

\paragraph*{Decision procedure.}Given threshold $\lambda$, we decide whether $\dmax \leq \lambda$ by traversing an arrangement on $\sS^2$. 
Similar to the 2D case, we track two types of events:
\begin{itemize}[noitemsep, topsep=0pt]
    \item Diagram changes: $O(n^2)$ intersection curves between insertion surfaces where the persistence diagrams change combinatorially;
    \item Threshold crossings: $O(n^2)$ curves on $\mathbb{S}^2$ where difference surfaces cross the threshold $\lambda$.
\end{itemize}
Overlaying these curves yields an arrangement of $O(n^4)$ complexity on $\sS^2$. 
We traverse the dual of this arrangement via an Euler tour by traversing each edge twice, maintaining a bottleneck matching throughout. 
Each diagram change requires $O(n)$ time to update the barcode pairing using the vineyard algorithm, while each threshold crossing may require $O(n \log n)$ time for augmenting path search.
The total complexity is $\tilde{O}(n^5)$.

\paragraph*{Improved algorithm via band refinement.}
We maintain a band $(\lambda_1, \lambda_2]$ containing $\dmax$ and process simplices in random order. For each simplex $\alpha$:

\begin{enumerate}[noitemsep, topsep=0pt]
\item Existence predicate: Test if any $\alpha$-candidate lies in the band. 
Following the approach from 2D adapted to surfaces: for fixed $\alpha$ and $\sigma \in K'$, consider the surface $\Delta_{\alpha,\sigma}$. 
For each $\beta \in K \setminus \{\alpha\}$, compute intersection curves with all $\Delta_{\beta,\tau}$ surfaces. 
Within the band, these $O(n)$ curves form disjoint arcs on $\mathbb{S}^2$ (no intra-$\beta$ intersections after pre-refinement). 
Similarly get $O(n)$ disjoint arcs from surfaces involving $\gamma \in K \setminus \{\alpha,\beta\}$. 
Check for red-blue intersections in $O(n \log n)$ time. 
Testing all combinations takes $O(n^4 \log n)$.

\item Band refinement: If positive, enumerate all $\alpha$-candidates in the band: $O(n^5)$ triple intersections, $O(n^3)$ curve maxima, $O(n)$ surface maxima.
Then refine via binary search in $\tilde{O}(n^5)$ time.
\end{enumerate}

Using the same randomization analysis as 2D, we expect $O(\log n)$ refinements. 
This results in total expected complexity of $O(n \cdot n^4 \log n + \log n \cdot n^5 \polylog n) = \tilde{O}(n^5)$.

\subsection{Decision Problem}

Given threshold $\lambda \geq 0$, we decide whether $\dmax \leq \lambda$ by constructing and traversing an arrangement on $\sS^2$. 
Unlike the 2D case where we sweep around a circle, here we must systematically explore the entire sphere.

We first construct an arrangement $\mathcal{A}$ on $\sS^2$ consisting of:
\begin{itemize}[noitemsep, topsep=0pt]
\item $O(n^2)$ curves from insertion surface intersections (where persistence diagrams change)
\item $O(n^2)$ curves where difference surfaces equal $\lambda$ (threshold crossings)
\end{itemize}

Each insertion surface is piecewise trigonometric with $O(1)$ pieces, so pairs intersect along $O(1)$ curves each. 
Similarly, each difference surface crosses the threshold $\lambda$ along $O(1)$ curves. 
The overlay of all these curves forms an arrangement with $O(n^4)$ vertices, edges, and faces.

\begin{theorem}
Whether $\dmax \leq \lambda$ can be decided in $\tilde{O}(n^5)$ time.
\end{theorem}

\begin{proof}
We construct the arrangement $\mathcal{A}$ and compute an Euler tour of its dual graph, treated as a planar graph embedded on $\sS^2$. 
The tour visits each edge twice, once in each direction.

Starting from an arbitrary face with a perfect matching in $G_\lambda(\omega)$, we traverse the tour and maintain the matching throughout. 
At insertion curve crossings where the persistence diagram changes, we update the barcode pairing via a single transposition in $O(n)$ time using the vineyard algorithm. 
At threshold crossings where an edge weight exceeds $\lambda$, we search for an augmenting path in $O(n \log n)$ time. 
If no augmenting path exists, we conclude $\dmax > \lambda$ and terminate.
If the tour completes with valid matchings maintained throughout, then $d_B(\omega) \leq \lambda$ for all $\omega \in \sS^2$, hence $\dmax \leq \lambda$.

The arrangement has $O(n^4)$ edges in total. 
The $O(n^4)$ diagram changes contribute $O(n^5)$ total time at $O(n)$ each, while the $O(n^2)$ threshold crossings contribute $\tilde{O}(n^3)$ at $O(n \log n)$ each. 
However, the bottleneck arises from traversing all $O(n^4)$ edges of the arrangement, where at each edge crossing we may need to update the matching. 
Since each edge crossing potentially requires $O(n \log n)$ time for augmenting path search, the total complexity is $O(n^4 \cdot n \log n) = \tilde{O}(n^5)$.
\end{proof}

\subsection{Candidate Pruning via Band Refinement and Complexity Analysis}

As in the 2D case, we maintain a band $(\lambda_1, \lambda_2]$ containing $\dmax$ and process simplices in random order. 
We first apply a pre-refinement to exclude all intra-events from the band.

\paragraph*{Pre-refinement.}
Before processing individual simplices, we refine the band to exclude all intra-events from its interior. 
We call a triple intersection point $\Delta_{\alpha,\sigma} \cap \Delta_{\beta,\tau} \cap \Delta_{\gamma,\rho}$ an \emph{intra-event} if any two of the six involved simplices are equal. 
Let $S$ be the set containing all $O(n^2)$ local maxima of difference surfaces, all $O(n^4)$ local maxima along surface intersection curves, and all intra-event triple intersections. Since there are $O(n^5)$ ways to choose 5 distinct simplices and $O(1)$ ways to make two of them equal, we have $|S| = O(n^5)$. 
We use binary search with the decision procedure to find the largest value in $S$ that is
$< d_\infty$ (setting $\lambda_1$) and the smallest value in $S$ that is $\geq d_\infty$
(setting $\lambda_2$), yielding the band $(\lambda_1, \lambda_2]$ that contains $d_\infty$
but excludes all values of $S$ from its interior.
This takes $O(n^5\polylog n)$ time.

\begin{lemma}
\label{lem:existence-3d}
For fixed $\alpha \in K$, we can test whether any inter-intersection involving $\alpha$ lies in the band $(\lambda_1, \lambda_2]$ in $O(n^4 \log n)$ time.
\end{lemma}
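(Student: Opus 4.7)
The plan mirrors the 2D existence test of Lemma~\ref{lem:existence_test} but carried out on the sphere, with pairwise surface intersections taking the role of point intersections. After pre-refinement, any $\alpha$-candidate remaining in the band must be an inter-intersection, that is, a triple intersection $\Delta_{\alpha,\sigma}\cap\Delta_{\beta,\tau}\cap\Delta_{\gamma,\rho}$ in which the six simplices $\alpha,\sigma,\beta,\tau,\gamma,\rho$ are all distinct; any coincidence among them would place the point in the pre-refinement set $S$ and hence outside $(\lambda_1,\lambda_2]$. In particular exactly one of the three pairs contains $\alpha$, so such candidates are enumerated by fixing the partner $\sigma$ of $\alpha$ ($O(n)$ choices) together with the two ``other'' simplices $\beta,\gamma$ ($O(n^2)$ choices), for a total of $O(n^3)$ outer triples per $\alpha$.

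For a fixed outer triple $(\sigma,\beta,\gamma)$, I would work on the surface $\Delta_{\alpha,\sigma}$ parametrized over $\sS^2$ and restrict attention to the region $R\subseteq\sS^2$ on which $\Delta_{\alpha,\sigma}(\omega)\in(\lambda_1,\lambda_2]$. A triple intersection of the desired type projects to a point in $R$ where the curves $C_\tau := \Delta_{\alpha,\sigma}\cap\Delta_{\beta,\tau}$ (as $\tau$ varies) and $C'_\rho := \Delta_{\alpha,\sigma}\cap\Delta_{\gamma,\rho}$ (as $\rho$ varies) cross. Call the former family red and the latter blue; each contains $O(n)$ curves. The crucial observation is that two red curves $C_{\tau_1}$ and $C_{\tau_2}$ can meet inside $R$ only at a point satisfying $\Delta_{\alpha,\sigma}=\Delta_{\beta,\tau_1}=\Delta_{\beta,\tau_2}$, which repeats the simplex $\beta$ and is therefore an intra-event excluded by pre-refinement; symmetrically for blue-blue crossings. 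Consequently the red curves are pairwise disjoint inside $R$, as are the blue curves, so we only have to detect the first red-blue intersection.

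By Lemma~\ref{lem:3d-well-behaved} each curve of the form $\Delta_{\alpha,\sigma}\cap\Delta_{\cdot,\cdot}$ consists of $O(1)$ trigonometric arcs; clipping each against $R$ and against the active-vertex partition of $\Delta_{\alpha,\sigma}$ produces $O(1)$ monotone pieces. Hence we obtain $O(n)$ red and $O(n)$ blue arcs inside $R$, and we run a spherical analogue of the Bentley--Ottmann sweep that aborts at the first red-blue event, taking $O(n\log n)$ time per outer triple. Combining with the $O(n^3)$ outer triples gives the claimed $O(n^4\log n)$ bound. The main care required lies in the sweep itself: since $R$ lives on $\sS^2$ and its boundary consists of piecewise trigonometric level curves of $\Delta_{\alpha,\sigma}$, we must choose a consistent sweep direction (e.g.\ via great-circle parametrization of $R$) and account for the $O(1)$ active-vertex transitions along each arc. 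However, the $O(1)$-piece structure together with the intra-freeness of each monochromatic family makes this a bookkeeping exercise entirely analogous to the 2D proof, not a substantive obstacle to achieving the stated asymptotic.
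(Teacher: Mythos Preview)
Your proposal is correct and follows essentially the same approach as the paper: fix the outer triple $(\sigma,\beta,\gamma)$, form the red family $\{\Delta_{\alpha,\sigma}\cap\Delta_{\beta,\tau}\}_\tau$ and the blue family $\{\Delta_{\alpha,\sigma}\cap\Delta_{\gamma,\rho}\}_\rho$ on the band of $\Delta_{\alpha,\sigma}$, observe that pre-refinement makes each family internally disjoint, and detect the first red--blue crossing with a sweep in $O(n\log n)$, yielding $O(n^4\log n)$ overall. The only point where the paper is more concrete than your sketch is the ``spherical analogue of Bentley--Ottmann'': the paper implements this by stereographic projection from a pole outside the band, which sends the great-circle arcs to planar circles and lines (Appendix~\ref{app:stereographic}) so that the standard planar sweep applies directly, avoiding the ad hoc great-circle parametrization you suggest.
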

\begin{proof}
Fix $\alpha \in K$ and $\sigma \in K'$, giving the difference surface $\Delta_{\alpha,\sigma}$. 
We seek triple intersection points of the form $\Delta_{\alpha,\sigma} \cap \Delta_{\beta,\tau} \cap \Delta_{\gamma,\rho}$ where $\beta, \gamma \in K \setminus \{\alpha\}$ and $\tau, \rho \in K'$, with the intersection value lying in $(\lambda_1, \lambda_2]$.

For fixed $\beta \in K \setminus \{\alpha\}$, compute the intersection curves between $\Delta_{\alpha,\sigma}$ and all surfaces $\Delta_{\beta,\tau}$ for $\tau \in K'$. 
Each pair of surfaces intersects along great circle arcs on $\mathbb{S}^2$ where $\Delta_{\alpha,\sigma}(\omega) = \Delta_{\beta,\tau}(\omega)$. 
We obtain $O(n)$ intersection curves on $\mathbb{S}^2$.

The crucial observation is that when we restrict to the band $\{\omega : \lambda_1 < \Delta_{\alpha,\sigma}(\omega) \leq \lambda_2\}$, these $O(n)$ curves become disjoint arcs. 
Any intersection between curves $\Delta_{\alpha,\sigma} \cap \Delta_{\beta,\tau_1}$ and $\Delta_{\alpha,\sigma} \cap \Delta_{\beta,\tau_2}$ would yield a triple point where $\Delta_{\alpha,\sigma} = \Delta_{\beta,\tau_1} = \Delta_{\beta,\tau_2}$. 
This is an intra-event (two surfaces share $\beta$), excluded from the band's interior via pre-refinement.

Similarly, for fixed $\gamma \in K \setminus \{\alpha, \beta\}$, the intersection curves between $\Delta_{\alpha,\sigma}$ and all $\Delta_{\gamma,\rho}$ surfaces yield another set of $O(n)$ disjoint arcs within the band. 
We now have two collections of arcs on the restricted domain of $\Delta_{\alpha,\sigma}$: the ``red'' arcs from $\beta$-surfaces and the ``blue'' arcs from $\gamma$-surfaces.
A red-blue intersection corresponds to a triple point $\Delta_{\alpha,\sigma} \cap \Delta_{\beta,\tau} \cap \Delta_{\gamma,\rho}$, which is an inter-intersection involving $\alpha$.

To detect red-blue intersections, we apply stereographic projection from the north pole of $\mathbb{S}^2$ to map the band region to a planar domain and use a sweep-line algorithm. 
As shown in Appendix~\ref{app:stereographic}, the projection preserves the disjointness of red and blue arc families while mapping great circle arcs to well-behaved curves (circles and lines) that intersect $O(1)$ times.
The sweep-line algorithm thus detects all red-blue intersections in $O(n \log n)$ time.

Iterating over all choices of $\sigma \in K'$, $\beta \in K \setminus \{\alpha\}$, and $\gamma \in K \setminus \{\alpha,\beta\}$ gives total complexity $O(n \cdot n \cdot n \cdot n \log n) = O(n^4 \log n)$.
\end{proof}

\begin{lemma}
\label{lem:enumeration-3d}
When the existence test is positive for simplex $\alpha$, we can enumerate all $O(n^5)$ candidates involving $\alpha$ that lie in $(\lambda_1, \lambda_2]$.
\end{lemma}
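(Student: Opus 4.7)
The strategy is to enumerate the remaining $\alpha$-candidates directly, exploiting the fact that global pre-refinement has already excluded from the interior of $(\lambda_1,\lambda_2]$ all Type~1 candidates (maxima of difference surfaces), all Type~2 candidates (maxima along surface intersection curves), and all intra-event triple intersections. Hence the only $\alpha$-candidates possibly remaining in the band are inter-event triple intersections $\Delta_{\alpha,\sigma}\cap\Delta_{\beta,\tau}\cap\Delta_{\gamma,\rho}$ in which the six simplices $\alpha,\sigma,\beta,\tau,\gamma,\rho$ are pairwise distinct.

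I would iterate over the $O(n)$ choices of $\sigma$ paired with $\alpha$ and the $\binom{O(n^2)}{2}=O(n^4)$ unordered pairs of non-$\alpha$ surfaces $\{\Delta_{\beta,\tau},\Delta_{\gamma,\rho}\}$, for $O(n^5)$ triples in total. For each triple, I solve the system $\Delta_{\alpha,\sigma}(\omega)=\Delta_{\beta,\tau}(\omega)=\Delta_{\gamma,\rho}(\omega)$ cell-by-cell in the active-vertex arrangement on $\sS^2$. Since each simplex in $\mathbb{R}^3$ has at most four vertices, each surface has $O(1)$ active-vertex cells, and the six simplices jointly partition $\sS^2$ into $O(1)$ cells relevant to this triple. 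Within any such cell the three surfaces take the form $|\langle a,\omega\rangle|$, $|\langle b,\omega\rangle|$, $|\langle c,\omega\rangle|$ (with $\tfrac{1}{2}$ factors on same-complex terms); fixing the absolute-value signs reduces the system to two linear equations in $\omega$ together with $\|\omega\|=1$, yielding $O(1)$ candidate points per sign pattern and hence $O(1)$ solutions per triple in $O(1)$ time. Evaluating the height at each candidate and filtering by membership in $(\lambda_1,\lambda_2]$ costs $O(1)$ per candidate, giving an overall enumeration in $O(n^5)$ time.

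The main obstacle I anticipate is the combinatorial bookkeeping across adjacent active-vertex cells: a genuine triple intersection may lie on the shared boundary of several cells, where several vertex assignments are simultaneously valid, so a naive per-cell enumeration could either miss the point or return it multiple times. To handle this, for each triple I would enumerate every compatible assignment of active vertices (an $O(1)$ list per triple), solve each resulting linear system, verify that the returned $\omega$ lies in the closure of the corresponding cell, and discard duplicates by $O(1)$ pairwise comparison. This preserves the $O(n^5)$ bound and produces the required explicit list of all $\alpha$-candidates in the band, which can then be sorted in $O(n^5 \log n)$ time and binary-searched using the decision procedure to refine the band in $\tilde{O}(n^5)$ time overall.
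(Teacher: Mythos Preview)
Your proposal is correct and follows essentially the same approach as the paper: after pre-refinement removes Type~1, Type~2, and intra-event candidates, you enumerate the $O(n^5)$ inter-event triple intersections involving $\alpha$ by iterating over the $O(n^5)$ choices of remaining simplices and computing $O(1)$ solutions per triple, then refine via binary search with the decision procedure. Your write-up supplies additional technical detail (solving the linear system cell-by-cell, handling boundary cells and duplicates) that the paper leaves implicit, but the argument is the same.
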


\begin{proof}
For each $\alpha$, the pre-refinement step excludes the $O(n)$ local maxima of difference surfaces and $O(n^2)$ local maxima of intersection curves.
We only need to consider triple intersection points $\Delta_{\alpha,\sigma} \cap \Delta_{\beta,\tau} \cap \Delta_{\gamma,\rho}$: With $O(n)$ choices each for $\sigma, \beta, \tau, \gamma, \rho$, and $O(1)$ intersection points per triple, we get $O(n^5)$ candidates.
Total enumeration takes $O(n^5)$ time. Binary search with the decision procedure then refines the band in $\tilde{O}(n^5)$ time.
\end{proof}

\begin{theorem}[Total complexity in 3D]
\label{thm:complexity-3d}
The algorithm runs in expected time $\tilde{O}(n^5)$.
\end{theorem}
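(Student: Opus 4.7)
The plan is to simply assemble the three cost components established in this section and combine them with a randomization argument identical in spirit to Theorem~\ref{thm:expected}. First, the pre-refinement step constructs the set $S$ of all intra-events and local maxima of surfaces and intersection curves. Its size is $O(n^5)$, so sorting $S$ and locating $\dmax$ within it by binary search requires $O(\log n)$ calls to the decision procedure of cost $\tilde{O}(n^5)$ each, for a total of $\tilde{O}(n^5)$. After this step, the working band $(\lambda_1,\lambda_2]$ has no intra-events in its interior, which is precisely the hypothesis needed for the red/blue arc disjointness exploited in Lemma~\ref{lem:existence-3d}.

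Next, we process the $O(n)$ simplices of $K\cup K'$ in a uniformly random order. For each simplex $\alpha$, the existence predicate of Lemma~\ref{lem:existence-3d} costs $O(n^4\log n)$, contributing $O(n^5\log n)=\tilde{O}(n^5)$ across all simplices regardless of outcome. Whenever the predicate is positive, we invoke the band-refinement procedure of Lemma~\ref{lem:enumeration-3d} at expected cost $\tilde{O}(n^5)$.

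The main remaining point, and the only one requiring care, is to bound the expected number $C$ of positive existence tests by $O(\log n)$. We reuse the argument of Theorem~\ref{thm:expected} verbatim: for each simplex $\alpha$, let $x_\alpha$ be the smallest $\alpha$-candidate that is $\geq \dmax$ and $y_\alpha$ the largest $\alpha$-candidate that is $<\dmax$, where $\alpha$-candidates now include triple intersections involving $\Delta_{\alpha,\cdot}$ or $\Delta_{\cdot,\alpha}$. The current band is always determined by some processed $x_{\alpha_j}$ and $y_{\alpha_i}$, and $\alpha$ can only trigger a refinement if $x_\alpha$ is a left-to-right minimum of the $x$-sequence or $y_\alpha$ is a left-to-right maximum of the $y$-sequence. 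In a uniformly random permutation of $O(n)$ elements, the expected number of either is $H_{O(n)}=O(\log n)$, so $\mathbb{E}[C]=O(\log n)$.

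Summing the contributions gives expected total cost
\[
\tilde{O}(n^5)\ +\ O(n)\cdot O(n^4\log n)\ +\ \mathbb{E}[C]\cdot\tilde{O}(n^5)\ =\ \tilde{O}(n^5),
\]
as claimed. The conceptually delicate step is the randomized bound on $C$, since it relies on the fact that every $\alpha$-candidate in the current band that is not one of $x_\alpha,y_\alpha$ would already have been excluded by an earlier refinement; this follows exactly as in the 2D argument because pre-refinement has stripped all intra-$\alpha$ events out of the band, leaving only inter-intersections as the candidates tracked by $x_\alpha$ and $y_\alpha$.
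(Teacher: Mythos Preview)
Your argument is correct and follows the same approach as the paper: tally the pre-refinement cost, the per-simplex existence-test cost, and the expected number of refinements via the left-to-right extrema argument of Theorem~\ref{thm:expected}. You spell out the pre-refinement contribution and the randomization step more explicitly than the paper's terse proof, but the structure and conclusion are identical.
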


\begin{proof}
We process $O(n)$ simplices from both complexes. For each simplex $\alpha$, we perform the existence test in $O(n^4 \log n)$ time. 
If positive, we enumerate and refine in $\tilde{O}(n^5)$ time.
We expect $O(\log n)$ refinements using the same analysis as in 2D.
The total expected complexity is therefore $O(n \cdot n^4 \log n + \log n \cdot n^5 \polylog n) = \tilde{O}(n^5)$.
\end{proof}

\section{Conclusion}
\label{Sec:Conclusion}
We have presented the first algorithms for computing the maximum bottleneck distance between Persistent Homology Transforms, achieving expected time complexity of $\tilde{O}(n^3)$ in $\mathbb{R}^2$ and $\tilde{O}(n^5)$ in $\mathbb{R}^3$. As a byproduct, our framework also improves the integral bottleneck distance computation from $\tilde{O}(n^6)$ to $\tilde{O}(n^5)$ in 2D. Using similar techniques, the integral version in $\mathbb{R}^3$ can be computed in $\tilde{O}(n^7)$ time by maintaining matchings across all $O(n^6)$ events, though we leave the technical details for future work.

Our adaptation of the framework introduced by Bjerkevik and Kerber~\cite{Bjerkevik2023} reveals a broader principle: directional transforms that vary continuously over parameter spaces can often be optimized through event-driven algorithms and randomized candidate pruning. This suggests potential applications beyond the PHT to other parametric problems in computational topology, such as the Euler Characteristic Transform or persistence-based shape signatures.

From a practical perspective, the cubic complexity in 2D makes exact PHT comparison feasible for moderate-sized datasets, potentially enabling new applications in shape matching and retrieval where sampling-based approximations were previously necessary. However, significant challenges remain: extending to the Wasserstein distance would require fundamentally different techniques due to its non-combinatorial nature, and handling higher-dimensional ambient spaces faces steep complexity barriers.

\bibliographystyle{plain}
\bibliography{maxBottleBib}

\appendix
\section{Properties of difference curves}
\subsection{Local maxima}
\label{app:maxima}
\begin{lemma}
\label{lem:diff_curve_maxima}
Let $K, K' \subset \mathbb{R}^2$ be simplicial complexes, and let $\Delta_{\alpha,\sigma}(\omega)$ be a difference curve for simplices $\alpha \in K$ and $\sigma \in K'$ with $\omega \in \mathbb{S}^1$. Then $\Delta_{\alpha,\sigma}$ has $O(1)$ local maxima on $\mathbb{S}^1$, which can be computed by solving algebraic equations.
\end{lemma}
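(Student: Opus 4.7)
The plan is to show that $\Delta_{\alpha,\sigma}$ is a piecewise trigonometric function on $\mathbb{S}^1$ with $O(1)$ pieces, and that each piece contributes $O(1)$ local maxima, with only $O(1)$ additional maxima possibly occurring at piece boundaries. This will yield the bound immediately.

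First I would decompose $\mathbb{S}^1$ into arcs on which the active vertices of both $\alpha$ and $\sigma$ are constant. Since simplices in $\mathbb{R}^2$ have at most $3$ vertices each, the active-vertex map of $\alpha$ changes only at directions $\omega$ satisfying $\langle v_i - v_j, \omega\rangle = 0$ for some pair of vertices of $\alpha$; there are at most $\binom{3}{2}=3$ such great-circle pairs on $\mathbb{S}^1$, hence at most $6$ switch directions, and likewise for $\sigma$. The common refinement therefore partitions $\mathbb{S}^1$ into $O(1)$ arcs. On each such arc, $I_\alpha(\omega)=\langle u,\omega\rangle$ and $I_\sigma(\omega)=\langle w,\omega\rangle$ for fixed vertices $u,w$, so $\Delta_{\alpha,\sigma}(\omega)=c\,|\langle u-w,\omega\rangle|$ with $c\in\{\tfrac12,1\}$.

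Next I would count interior maxima within one arc. Writing $u-w=(a,b)$ and $\omega=(\cos\theta,\sin\theta)$, the restriction equals $c\,|a\cos\theta+b\sin\theta|=c\sqrt{a^2+b^2}\,|\cos(\theta-\phi_0)|$ for some phase $\phi_0$. On $\mathbb{S}^1$ this function has at most two local maxima (and only one within any single arc of the partition). Setting the derivative to zero reduces to the algebraic condition $\langle u-w,\omega^\perp\rangle=0$, which is solvable by elementary trigonometry.

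Finally I would account for boundary points between arcs. At such a point either the active vertex of $\alpha$ or of $\sigma$ switches, producing a kink in $\Delta_{\alpha,\sigma}$ that might be a local maximum not visible to either side's derivative; in addition, the inner $|\cdot|$ can produce a cusp where the curve touches zero. Both kinds of non-smooth points are determined by explicit linear equations in $\omega$, and there are $O(1)$ of them. Summing $O(1)$ interior critical points across $O(1)$ arcs with $O(1)$ boundary candidates yields $O(1)$ local maxima in total, all computable as roots of explicit algebraic equations. The only subtlety I anticipate is correctly classifying boundary points as maxima versus mere kinks, which can be handled by comparing one-sided values of $\Delta_{\alpha,\sigma}$.
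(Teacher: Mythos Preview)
Your proposal is correct and follows essentially the same approach as the paper: partition $\mathbb{S}^1$ into $O(1)$ arcs on which both active vertices are fixed, observe that on each arc the curve is $c\,|\langle u-w,\omega\rangle|$ with $O(1)$ interior critical points, and then account for the $O(1)$ non-smooth boundary/cusp points. The paper's proof is organized the same way (non-differentiable points plus vanishing-derivative points), with the only cosmetic difference that it does not separate out the constant $c$ since the lemma as stated has $\alpha\in K$ and $\sigma\in K'$.
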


\begin{proof}
Recall that for a simplex $\alpha$ with vertices $\{v_0, \ldots, v_k\}$, the insertion curve is
\[
I_\alpha(\omega) = \max_{0 \leq i \leq k} \langle v_i, \omega \rangle.
\]
This is the upper envelope of at most $k+1$ trigonometric functions on $\mathbb{S}^1$. 
Since each pair of trigonometric functions $\langle v_i, \omega \rangle$ and $\langle v_j, \omega \rangle$ can be equal for at most two directions $\omega \in \mathbb{S}^1$, the function $I_\alpha$ is piecewise trigonometric with $O(k)$ pieces.

The difference curve $\Delta_{\alpha,\sigma}(\omega) = |I_\alpha(\omega) - I_\sigma(\omega)|$ is thus piecewise differentiable on $\mathbb{S}^1$. 
Local maxima can occur only at points where $\Delta_{\alpha,\sigma}$ is non-differentiable or where its derivative vanishes. 
The non-differentiable points include directions where $I_\alpha(\omega) = I_\sigma(\omega)$ (where the difference curve has a cusp at zero) and directions where the active vertex changes in either $I_\alpha$ or $I_\sigma$ (corners in the insertion curves). 
At differentiable points, if $I_\alpha(\omega) > I_\sigma(\omega)$ locally, then $\Delta_{\alpha,\sigma} = I_\alpha - I_\sigma$, and critical points occur where the directional derivatives of the active height functions are equal.

For simplicial complexes in $\mathbb{R}^2$, we have $|\alpha| \leq 3$ and $|\sigma| \leq 3$ (triangles at most). 
The insertion curve $I_\alpha$ has $O(1)$ pieces, and similarly for $I_\sigma$. 
There are $O(1)$ directions where the active vertex changes, the equation $I_\alpha(\omega) = I_\sigma(\omega)$ yields $O(1)$ solutions, and within each region of fixed active vertices, there are $O(1)$ critical points. 
Thus $\Delta_{\alpha,\sigma}$ has $O(1)$ candidate points for local maxima.

To compute these points explicitly, parameterize $\omega \in \mathbb{S}^1$ as $\omega = (\cos\theta, \sin\theta)$. 
Then $\langle v_i, \omega \rangle = x_i\cos\theta + y_i\sin\theta$ where $v_i = (x_i, y_i)$. 
The conditions for critical points become trigonometric equations: active vertex changes occur when $x_i\cos\theta + y_i\sin\theta = x_j\cos\theta + y_j\sin\theta$, and insertion curves meet when $\max_i(x_i\cos\theta + y_i\sin\theta) = \max_j(u_j\cos\theta + v_j\sin\theta)$ where $(u_j, v_j)$ are vertices of $\sigma$. 
These reduce to trigonometric equations with finitely many solutions in $[0, 2\pi)$, each solvable in closed form.
\end{proof}

\subsection{Proof of Lemma~\ref{lem:event-bound}}
\label{appendix:event-bound-proof}
\begin{proof}
We first establish that each difference curve $\Delta_{\alpha,\sigma}$ crosses any horizontal line $y = \lambda$ at most $O(1)$ times as $\omega$ varies around $S^1$.

For any pair $(\alpha, \sigma) \in K \times K'$, the difference curve $\Delta_{\alpha,\sigma}(\omega) = |I_\alpha(\omega) - I_\sigma(\omega)|$ is piecewise trigonometric. 
Since simplices in $\mathbb{R}^2$ have at most 3 vertices, there are at most 12 sectors on $S^1$ where the active vertices of both $\alpha$ and $\sigma$ remain fixed.

We call a maximal interval of directions where the active vertices remain fixed a \emph{sector}. 
Within each sector with active vertices $u \in \alpha$ and $w \in \sigma$, we have 
$$\Delta_{\alpha,\sigma}(\omega) = |\langle u - w, \omega \rangle|.$$
Parameterizing $\omega = (\cos\theta, \sin\theta)$ and letting $u - w = (d_x, d_y)$, this becomes
$$\Delta_{\alpha,\sigma}(\theta) = |d_x \cos\theta + d_y \sin\theta| = r|\cos(\theta - \phi)|$$
where $r = \|u - w\|$ and $\phi = \arctan(d_y/d_x)$.

The equation $r|\cos(\theta - \phi)| = \lambda$ has solutions when $|\cos(\theta - \phi)| = \lambda/r$. For $\lambda < r$, this yields exactly 4 solutions in $[0, 2\pi)$:
$$\theta \in \{\phi \pm \arccos(\lambda/r), \phi + \pi \pm \arccos(\lambda/r)\}.$$
For $\lambda = r$, there are 2 solutions, and for $\lambda > r$, there are no solutions.

Since each sector contributes at most 4 crossings and there are at most 12 sectors, each difference curve crosses any horizontal line at most $O(1)$ times.
With $O(n^2)$ difference curves total (one for each pair $(\alpha, \sigma) \in K \times K'$), we obtain $O(n^2)$ crossings with any horizontal line.

For insertion-curve intersections, when two insertion curves $I_\alpha$ and $I_\beta$ from the same complex meet at direction $\omega$, we have $I_\alpha(\omega) = I_\beta(\omega)$. This forces all difference curves $\Delta_{\alpha,\sigma}(\omega) = \Delta_{\beta,\sigma}(\omega)$ for any $\sigma$ to equal zero at that direction. Thus insertion-curve intersections correspond precisely to difference curves crossing the line $y = 0$. With $O(n)$ insertion curves from each complex, there are at most $\binom{O(n)}{2} = O(n^2)$ such intersections.
\end{proof}

\section{Treatment of Overlapping Difference Curves}
\label{app:overlaps}

\subsection{The Overlap Issue}
When simplices share vertices, their insertion and difference curves/surfaces may coincide over certain regions. 
Specifically, if simplices $\alpha, \beta \in K$ both contain vertex $v$, then in any direction $\omega$ where $v$ is the active vertex for both simplices, we have $I_\alpha(\omega) = I_\beta(\omega) = \langle v, \omega \rangle$.

This creates two types of overlaps:
\begin{enumerate}[noitemsep, topsep=0pt]
\item Insertion overlaps: When $I_\alpha = I_\beta$ over some region
\item Difference overlaps: When $\Delta_{\alpha,\sigma} = \Delta_{\beta,\tau}$ over some region due to shared active vertices
\end{enumerate}

\subsection{Why Overlaps Don't Affect the Algorithm}
Despite these geometric coincidences, our algorithm remains correct because:

\begin{lemma}
Each difference curve $\Delta_{\alpha,\sigma}$ maintains a unique combinatorial identity determined by the ordered pair $(\alpha, \sigma)$, independent of geometric overlaps.
\end{lemma}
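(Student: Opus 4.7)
The plan is to make explicit the distinction between a difference curve as a labeled mathematical object and as a set of values in $\sS^{m-1} \times \R$. I would formalize each $\Delta_{\alpha,\sigma}$ as an ordered pair $((\alpha,\sigma), f_{\alpha,\sigma})$ where $f_{\alpha,\sigma}\colon \sS^{m-1}\to\R$ is the function given by the defining formula in terms of $I_\alpha$ and $I_\sigma$ (with the factor $\tfrac{1}{2}$ when $\alpha,\sigma$ lie in the same complex). Because $I_\alpha$ depends only on the vertex set of $\alpha$ and $I_\sigma$ only on that of $\sigma$, the function $f_{\alpha,\sigma}$ is determined completely and uniquely by the abstract ordered pair $(\alpha,\sigma)$. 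Distinct ordered pairs yield distinct labeled objects, even when the underlying functions agree on a positive-measure subset of $\sS^{m-1}$.

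Next, I would record that geometric overlaps never collapse two labels into one: if $f_{\alpha,\sigma}\equiv f_{\beta,\tau}$ on an interval because $\alpha,\beta$ (resp.\ $\sigma,\tau$) share an active vertex there, the labels $(\alpha,\sigma)$ and $(\beta,\tau)$ remain two distinct entries in our enumeration. All of the algorithmic bookkeeping in Section~\ref{Sec:2D}, namely the candidate count in Corollary~\ref{cor:candidate_count}, the $\alpha$-candidates in the existence test (Lemma~\ref{lem:existence_test}), and the refinement step (Lemma~\ref{lem:refinement}), iterates over ordered pairs rather than over value traces. Consequently, any coincident value is enumerated once per label, and at no point does the algorithm need to recover from the coincidence by deciding which labeled curve ``really'' produced the value at a given $\omega$.

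I would then argue that this labeling convention suffices for correctness. For any $\omega^*$ realizing $\dmax$, Theorem~\ref{thm:R2events} places $\omega^*$ at a local maximum or at an intersection of difference curves. If the bottleneck is realized on an overlap interval, a local maximum of the coincident value on that interval is simultaneously a local maximum of $f_{\alpha,\sigma}$ for every label $(\alpha,\sigma)$ appearing in the overlap, so our per-label maxima list covers $\omega^*$ through at least one label. Endpoints of overlap intervals appear as transverse intersections between the two labeled curves that enter or leave the overlap; these are exactly the ``intersections'' flagged in the remark preceding the lemma and are included in the enumerated intersection candidates.

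The main obstacle I anticipate is precisely the treatment of overlap endpoints. Showing that at such a boundary $\omega_0$ there exist labels $(\alpha,\sigma)\neq(\beta,\tau)$ in our enumeration whose $f$-functions meet transversely at height $H(\omega_0)$ requires a case analysis on how the active-vertex configuration of $\alpha$ and of $\sigma$ transitions at $\omega_0$. I would handle this by fixing small $\eps$-neighborhoods on either side of $\omega_0$, identifying the active-vertex pairs on each side, and checking that at least one labeled curve from each side contributes a value that differs from the overlap value outside the interval, which gives the required transverse crossing. Once this boundary analysis is in place, the rest of the lemma follows purely from the bookkeeping convention laid out in the first two paragraphs.
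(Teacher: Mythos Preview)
Your approach is correct but takes a different route from the paper. You formalize the identity via an explicit labeling convention $((\alpha,\sigma),f_{\alpha,\sigma})$ and then argue that all algorithmic bookkeeping iterates over labels rather than value traces. The paper instead grounds the identity directly in the bipartite matching: each edge of $G(\omega)$ corresponds to a specific pair of diagram points, and each diagram point is contributed by a specific simplex, so even when $\Delta_{\alpha,\sigma}(\omega)=\Delta_{\beta,\tau}(\omega)$ the two represent distinct edges in the matching. The paper's argument is shorter and ties the identity to exactly the place where it is used (the matching graph), whereas your argument is more definitional and self-contained.

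One remark on scope: your third and fourth paragraphs---about local maxima on overlap intervals and the case analysis at overlap endpoints---are not needed for this lemma at all. That material is the content of the \emph{next} result in the appendix (that the candidate set still captures $\dmax$ under overlaps), and the paper proves it separately. The present lemma is essentially definitional; once you have stated the labeling convention in your first paragraph, the lemma is done. The endpoint analysis you sketch is reasonable, but it belongs to the subsequent theorem, and conflating the two makes your write-up longer and obscures what the lemma itself is claiming.
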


\begin{proof}
In the bottleneck matching, each edge in the bipartite graph corresponds to a specific pair of simplices, not just their geometric values. 
When computing the perfect matching at direction $\omega$, an edge $(p,q)$ in the graph is determined by:
\begin{itemize}[noitemsep, topsep=0pt]
\item The birth-death pair $(b_\alpha, d_\alpha)$ from simplex $\alpha$'s contribution to $\text{Dgm}(h^K_\omega)$
\item The birth-death pair $(b_\sigma, d_\sigma)$ from simplex $\sigma$'s contribution to $\text{Dgm}(h^{K'}_\omega)$
\end{itemize}
Even if $\Delta_{\alpha,\sigma}(\omega) = \Delta_{\beta,\tau}(\omega)$ geometrically, these represent distinct edges in the matching.
\end{proof}

\subsection{Candidate Values Under Overlaps}
The overlapping of difference curves raises the question: are our identified candidates still sufficient to capture $d_\infty$?

\begin{theorem}
Even with overlapping difference curves, the maximum bottleneck distance $d_\infty$ is attained at one of our identified candidate values.
\end{theorem}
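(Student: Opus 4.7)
The plan is to reduce to Theorem~\ref{thm:R2events}: show that even when several difference curves coincide on intervals, at least one combinatorially identified curve attains $d_\infty$ at either a local maximum or an isolated crossing, both of which are in the enumerated candidate set.

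I would start by fixing a direction $\omega^*$ with $H(\omega^*) = d_\infty$ and selecting any combinatorial pair $(\alpha,\sigma)$ with $\Delta_{\alpha,\sigma}(\omega^*) = d_\infty$; such a pair exists because $H$ is by definition the pointwise maximum over all difference curves. Let $U \subseteq \mathbb{S}^1$ be the maximal connected arc containing $\omega^*$ on which $\Delta_{\alpha,\sigma} \equiv d_\infty$. In the generic case $U = \{\omega^*\}$ and $\omega^*$ is a local maximum of $\Delta_{\alpha,\sigma}$, a type-(1) candidate. If instead $U$ has nonempty interior, I would analyze a boundary point $\omega_0$ of $U$: there, $\Delta_{\alpha,\sigma}$ must either drop below $d_\infty$ immediately on one side, so that $\omega_0$ is a one-sided local max of $\Delta_{\alpha,\sigma}$ (type-(1) candidate), or $H$ remains at $d_\infty$ just past $\omega_0$ because a combinatorially distinct curve $\Delta_{\beta,\tau}$ takes over. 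In the latter case $\omega_0$ is the endpoint of a geometric overlap interval, which the remark preceding the theorem explicitly places in the intersection set and is therefore a type-(2) candidate.

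The main obstacle is the edge case $U = \mathbb{S}^1$, in which $\Delta_{\alpha,\sigma}$ is globally constant at the value $d_\infty$. I would rule this out by appealing to the sector-wise trigonometric form $r\lvert\cos(\theta - \phi)\rvert$ derived in the proof of Lemma~\ref{lem:event-bound}: a piecewise combination of such expressions is identically constant on $\mathbb{S}^1$ only if the amplitude $r = \|u - w\|$ vanishes in every sector, forcing every active-vertex pair to satisfy $u = w$ and hence $\Delta_{\alpha,\sigma} \equiv 0$. In that degenerate regime $d_\infty = 0$ is attained everywhere, in particular at every enumerated local maximum of every difference curve, so the candidate list still certifies $d_\infty$.

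A second subtlety I would flag but not belabor: overlaps may cause several combinatorially distinct curves to simultaneously attain their local maxima at the same $\omega$. This does not affect correctness—each such curve independently places the same value in our candidate set, so redundancies never delete a valid candidate. Thus the enumeration of type-(1) local maxima and type-(2) isolated intersections (including overlap endpoints) continues to contain a direction realizing $d_\infty$, completing the reduction to Theorem~\ref{thm:R2events}.
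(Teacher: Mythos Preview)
There is a genuine gap, and it stems from a misconception you state explicitly: you write that ``$H$ is by definition the pointwise maximum over all difference curves.'' It is not. The bottleneck distance $H(\omega)$ is a minimum over matchings of a maximum edge weight; at each $\omega$ some curve $\Delta_{\alpha,\sigma}$ equals $H(\omega)$ (the bottleneck edge of an optimal matching), but many other difference curves may lie \emph{above} $H(\omega)$.

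This breaks your case analysis. In the case $U=\{\omega^*\}$ you conclude that $\omega^*$ is a local maximum of $\Delta_{\alpha,\sigma}$, but nothing prevents $\Delta_{\alpha,\sigma}$ from rising strictly above $d_\infty$ on one side of $\omega^*$ while a \emph{different} matching keeps $H\le d_\infty$ there. The same omission recurs at the boundary $\omega_0$ of a nontrivial $U$: your dichotomy ``drops below $d_\infty$'' versus ``$H$ remains at $d_\infty$ via another curve'' misses the third possibility that $\Delta_{\alpha,\sigma}$ goes above $d_\infty$. In that missed case $\omega_0$ is neither a one-sided maximum of $\Delta_{\alpha,\sigma}$ nor an overlap endpoint in your sense.

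The paper avoids this by splitting on whether a single curve \emph{realizes} $H$ throughout a full neighborhood of $\omega^*$. If so, $H=\Delta_{\alpha,\sigma}$ locally and the local maximum of $H$ transfers directly. If not, a continuity argument forces a second, combinatorially distinct curve to equal $d_\infty$ at $\omega^*$: any matching $M'$ that achieves cost $\le d_\infty$ arbitrarily close to $\omega^*$ on the side where $\Delta_{\alpha,\sigma}>d_\infty$ must, by continuity, have cost exactly $d_\infty$ at $\omega^*$, and its bottleneck edge $(\beta,\tau)\ne(\alpha,\sigma)$ supplies the intersection. Your level-set approach on $\{\Delta_{\alpha,\sigma}=d_\infty\}$ can be repaired by inserting this argument, but as written the dichotomies are incomplete.
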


\begin{proof}
Let $\omega^*$ be a direction where $d_\infty$ is attained. At $\omega^*$, at least one difference curve achieves the value $d_\infty$. We consider all possible configurations.

If a single combinatorially distinct curve $\Delta_{\alpha,\sigma}$ achieves the maximum value $d_\infty$ in a neighborhood of $\omega^*$, then by Lemma~\ref{lem:continuity} (Lipschitz continuity) $\omega^*$ must be a local maximum of $\Delta_{\alpha,\sigma}$. This is a type-(i) candidate in our enumeration. Note that other curves may coincide with $\Delta_{\alpha,\sigma}$ geometrically at $\omega^*$, but as long as $\Delta_{\alpha,\sigma}$ is locally maximal, it appears in our candidate set.

Suppose multiple curves achieve the value $d_\infty$ at $\omega^*$. Consider any two such curves $\Delta_{\alpha,\sigma}$ and $\Delta_{\beta,\tau}$. If these curves are geometrically distinct in every neighborhood of $\omega^*$, then they cross transversely at $\omega^*$, making it an isolated intersection point. This is a type-(ii) candidate.

The more subtle case occurs when $\Delta_{\alpha,\sigma}$ and $\Delta_{\beta,\tau}$ coincide over an open interval $I$ containing $\omega^*$. This happens when the simplices share vertices that remain active throughout the interval. Over this interval, both curves have the same value; if that value realizes the bottleneck distance throughout $I$, then $H$ is constant on $I$. The maximum over $I$ must be attained somewhere, and there are two possibilities. Either the maximum occurs at a critical point within $I$, in which case this critical point of $\Delta_{\alpha,\sigma}$ (and simultaneously of $\Delta_{\beta,\tau}$) is a type-(i) candidate. Or the maximum occurs at a boundary point of $I$, where the curves begin to separate. Such boundary points are precisely where the active vertices change or where the curves transition from overlapping to non-overlapping, and these are captured as intersection events in our type-(ii) candidates.

In the most general case, many curves may coincide at $\omega^*$, but the same analysis applies. The value $d_\infty$ is realized by at least one combinatorially distinct curve, and either $\omega^*$ is a critical point of this curve or it is a point where curves meet or separate. Both cases are included in our candidate enumeration.

Therefore, even with arbitrary overlaps between difference curves, our candidate set contains all points where $d_\infty$ can be attained.
\end{proof}

\subsection{Counting and Complexity}

Overlaps do not increase our asymptotic complexity bounds:

\begin{lemma}
The number of combinatorially distinct candidates remains $O(n^4)$ in 2D and $O(n^6)$ in 3D, even with overlaps.
\end{lemma}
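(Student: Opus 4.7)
\medskip

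The plan is to argue that the combinatorial structure of the difference curves/surfaces, not their geometric coincidences, controls the candidate count. I first reduce the counting to a per-curve, per-pair, and (in 3D) per-triple bound. Then I show that each such unit contributes $O(1)$ combinatorial candidates, independently of whether the geometric images overlap, so summing over the combinatorial index set gives the claimed bounds.

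\textbf{Step 1: Combinatorial labels.} Index the difference curves/surfaces by the \emph{ordered pair} $(\alpha,\sigma)$ of simplices, rather than by their geometric image. There are $O(n^2)$ such labels, because the pair $(|K|+|K'|,|K|+|K'|)$ is fixed by the input, independent of which curves happen to coincide. This is the right level of granularity, since the candidate enumeration in Corollary~\ref{cor:candidate_count} (and its 3D analogue) already uses combinatorial pairs and triples.

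\textbf{Step 2: Per-curve bound for type-(1) candidates.} Each $\Delta_{\alpha,\sigma}$ is, as shown in Appendix~\ref{app:maxima}, piecewise trigonometric with $O(1)$ active-vertex sectors, because $|\alpha|,|\sigma|\le 3$ in 2D (resp.\ $\le 4$ in 3D). Within each sector it is an absolute-value of a single trigonometric form and has $O(1)$ local maxima; at the $O(1)$ sector boundaries the curve may have non-smooth local maxima, which also number $O(1)$. Overlaps with other curves do not change this count, since the maxima of $\Delta_{\alpha,\sigma}$ itself depend only on the vertices of $\alpha$ and $\sigma$. Summing over labels gives $O(n^2)$ type-(1) candidates in both 2D and 3D.

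\textbf{Step 3: Per-pair bound for type-(2) candidates.} Fix two labels $(\alpha,\sigma)$ and $(\beta,\tau)$. Overlay the active-vertex sector decompositions of both curves; this yields $O(1)$ common regions where both curves are single trigonometric forms. In each region, either the two forms are identically equal (an overlap region) or they have $O(1)$ isolated transverse crossings. The candidates contributed by this pair are (i) those isolated crossings and (ii) the boundary directions where the overlap region starts or ends. Both kinds are controlled by the $O(1)$ common-region boundaries, so each pair contributes $O(1)$ combinatorial candidates. With $\binom{O(n^2)}{2}=O(n^4)$ pairs, we get $O(n^4)$ type-(2) candidates in either dimension.

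\textbf{Step 4: Per-triple bound in 3D.} For three labels $(\alpha,\sigma),(\beta,\tau),(\gamma,\rho)$ in 3D, overlay the three active-vertex partitions of $\mathbb{S}^2$ into $O(1)$ common cells, on each of which all three surfaces are single trigonometric forms. In a cell, generic triples meet at $O(1)$ isolated points; non-generic behaviour manifests as pairwise or total coincidence over positive-measure subregions, whose boundaries are again the $O(1)$ cell boundaries or loci where an underlying linear form vanishes. In all cases this produces $O(1)$ combinatorial triple-candidates per triple. Summing over $\binom{O(n^2)}{3}=O(n^6)$ triples gives $O(n^6)$ type-(3) candidates. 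Combining Steps 2--4, the totals are $O(n^2)+O(n^4)=O(n^4)$ in 2D and $O(n^2)+O(n^4)+O(n^6)=O(n^6)$ in 3D.

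\textbf{Main obstacle.} The delicate point is Step 3 (and its 3D analogue in Step 4): when two curves overlap over an interval, naively one might fear a continuum of intersection points. The resolution is to count only the combinatorially meaningful candidates, namely isolated transverse crossings and the endpoints of overlap intervals, and to verify that both are controlled by the $O(1)$ active-vertex sector boundaries of the pair. Making this precise requires a careful case analysis of how active-vertex transitions in $\alpha,\sigma,\beta,\tau$ interact, which is exactly the structure already exploited in the overlap argument of the previous theorem in Appendix~\ref{app:overlaps}.
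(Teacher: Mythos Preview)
Your proof is correct and follows essentially the same approach as the paper: count candidates by combinatorial labels (pairs or triples of simplices) rather than geometric images, observe each label contributes $O(1)$ candidates via the active-vertex sector structure, and note that overlaps only collapse multiple combinatorial events onto the same value without inflating the count. The paper's proof is terser---it simply asserts $O(n^2)$ curves with $O(1)$ pieces yield $O(n^4)$ intersection events and that overlap endpoints count as intersections---whereas you spell out the per-pair and per-triple sector overlay argument more carefully, but the underlying mechanism is identical.
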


\begin{proof}
Each candidate corresponds to a combinatorially distinct event:
\begin{itemize}[noitemsep, topsep=0pt]
\item Local maxima: Each of the $O(n^2)$ distinct difference curves contributes $O(1)$ maxima
\item Intersections: Even when curves overlap, the points where they begin or cease to coincide are intersection events. Since we have $O(n^2)$ curves with $O(1)$ pieces each, there are $O(n^4)$ such events in 2D
\item The 3D analysis follows similarly with $O(n^6)$ triple intersection points
\end{itemize}

Geometric coincidences may cause multiple combinatorial events to occur at the same value, but this only affects the constants, not the asymptotic count.
\end{proof}

\section{Geometric Properties in 3D}
\begin{proof}[Proof of Lemma~\ref{lem:3d-well-behaved}]
\label{app:3d-geometry}

For part (1), consider two insertion surfaces $I_\alpha$ and $I_\beta$ where $\alpha$ and $\beta$ are simplices in $\mathbb{R}^3$ with at most 4 vertices each. 
The insertion surface $I_\alpha(\omega) = \max_{v \in \alpha} \langle v, \omega \rangle$ is the upper envelope of at most 4 trigonometric functions on $\mathbb{S}^2$. 

The domain $\mathbb{S}^2$ is partitioned into regions by great circles where pairs of vertices achieve equal heights. 
For vertices $v_i, v_j \in \alpha$, the equation $\langle v_i, \omega \rangle = \langle v_j, \omega \rangle$ yields $\langle v_i - v_j, \omega \rangle = 0$, defining a great circle. 
With $\binom{4}{2} = 6$ pairs of vertices, we get at most 6 great circles partitioning $\mathbb{S}^2$ into $O(1)$ cells. 
Within each cell, $I_\alpha$ has a fixed active vertex.

The intersection $I_\alpha(\omega) = I_\beta(\omega)$ occurs when $\max_{v \in \alpha} \langle v, \omega \rangle = \max_{w \in \beta} \langle w, \omega \rangle$.
Within cells where the active vertices are $u \in \alpha$ and $v \in \beta$, this reduces to $\langle u, \omega \rangle = \langle v, \omega \rangle$, or $\langle u - v, \omega \rangle = 0$, defining a great circle.
Since there are $O(1) \times O(1) = O(1)$ pairs of cells, we get $O(1)$ great circle arcs total.

For part (2), consider difference surfaces $\Delta_{\alpha,\sigma}(\omega) = |I_\alpha(\omega) - I_\sigma(\omega)|$ and $\Delta_{\beta,\tau}(\omega) = |I_\beta(\omega) - I_\tau(\omega)|$. 
Within regions where active vertices are fixed---say $(u_1, w_1)$ for $(\alpha, \sigma)$ and $(u_2, w_2)$ for $(\beta, \tau)$---we have:
\[\Delta_{\alpha,\sigma}(\omega) = |\langle u_1 - w_1, \omega \rangle| = \Delta_{\beta,\tau}(\omega) = |\langle u_2 - w_2, \omega \rangle|\]

This equation $|\langle u_1 - w_1, \omega \rangle| = |\langle u_2 - w_2, \omega \rangle|$ splits into four cases based on signs:
\begin{align}
\langle u_1 - w_1, \omega \rangle &= \langle u_2 - w_2, \omega \rangle \quad \Rightarrow \quad \langle (u_1 - w_1) - (u_2 - w_2), \omega \rangle = 0\\
\langle u_1 - w_1, \omega \rangle &= -\langle u_2 - w_2, \omega \rangle \quad \Rightarrow \quad \langle (u_1 - w_1) + (u_2 - w_2), \omega \rangle = 0
\end{align}
Each case defines a great circle. 
With two sign choices for each surface, we get at most 4 great circles per pair of regions. 
Since each surface has $O(1)$ regions (from $O(1)$ vertex pairs), the total intersection consists of $O(1)$ great circle arcs.

For part (3), along an intersection curve $\mathcal{C}$ where $\Delta_{\alpha,\sigma}(\omega) = \Delta_{\beta,\tau}(\omega)$, local maxima of the common value occur at critical points. 
Parameterizing a great circle arc as $\omega(t) = \cos(t)p + \sin(t)q$ where $p, q$ are orthonormal vectors in $\mathbb{R}^3$, the function along the arc becomes:
\[f(t) = |\langle u - w, \cos(t)p + \sin(t)q \rangle| = |a\cos(t) + b\sin(t)| = \sqrt{a^2 + b^2}|\cos(t - \phi)|\]
where $a = \langle u - w, p \rangle$, $b = \langle u - w, q \rangle$, and $\tan(\phi) = b/a$. This function has exactly 2 local maxima at $t = \phi$ and $t = \phi + \pi$ on $[0, 2\pi)$. 
Additionally, maxima can occur at boundaries where regions change. With $O(1)$ arcs and $O(1)$ boundaries, we get $O(1)$ local maxima total.

For part (4), within a region of fixed active vertices $u \in \alpha$ and $w \in \sigma$, the threshold equation $\Delta_{\alpha,\sigma}(\omega) = \lambda$ becomes $|\langle u - w, \omega \rangle| = \lambda$. Let $d = u - w$ with $\|d\| = r$. 
This splits into:
$\langle d, \omega \rangle = \pm\lambda$.
Each equation defines a plane at distance $\lambda/r$ from the origin. 
The intersection of such a plane with $\mathbb{S}^2$ is a circle (a "small circle" on the sphere) unless $\lambda = r$, in which case it's a point, or $\lambda > r$, in which case it's empty. 
These two circles are parallel, symmetric about the great circle perpendicular to $d$. 
With $O(1)$ regions per difference surface and at most 2 circles per region, we obtain $O(1)$ threshold curves total.
\end{proof}

\subsection{Stereographic Projection for Sweep-Line Algorithm}
\label{app:stereographic}

\begin{lemma}[Stereographic projection preserves structure]
When we apply stereographic projection from the north pole to map a band region on $\sS^2$ to the plane, the resulting curves maintain the properties needed for the sweep-line algorithm:
\begin{enumerate}[noitemsep, topsep=0pt]
\item Great circle arcs map to circles and lines in the plane.
\item Disjoint arc families remain disjoint after projection.
\item Each projected curve can be decomposed into $O(1)$ x-monotone arcs.
\end{enumerate}
\end{lemma}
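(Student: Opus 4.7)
The plan is to invoke the classical conformal and circle-preserving properties of stereographic projection, combined with a careful choice of the projection pole $N$ to ensure bijectivity and avoid degeneracies on the band region.

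First I would pick the projection pole $N \in \sS^2$ to lie strictly outside the band $B = \{\omega : \lambda_1 < \Delta_{\alpha,\sigma}(\omega) \leq \lambda_2\}$. Since $\lambda_1 > 0$ after pre-refinement, any zero of $\Delta_{\alpha,\sigma}$ lies outside $B$, and a small generic perturbation of $N$ further ensures that no great circle supporting one of the $O(n)$ red or blue arcs passes through $N$. Stereographic projection $\pi : \sS^2 \setminus \{N\} \to \mathbb{R}^2$ is then a smooth bijection whose restriction to $\overline{B}$ is a homeomorphism onto a bounded planar region.

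For part (1), I would apply the classical theorem that $\pi$ sends every circle on the sphere to either a circle or a line in the plane, with the line case arising exactly when the spherical circle passes through $N$. Since the great circles supporting our arcs avoid $N$ by construction, every such arc maps to a bounded circular arc (with line segments allowed as a harmless fallback if $N$ cannot be completely avoided). Part (2) is then immediate from bijectivity: two arcs disjoint on $\overline{B}$ have disjoint images, since $\pi(x) = \pi(y)$ iff $x = y$. For part (3), each red or blue curve on $\sS^2$ consists of $O(1)$ great circle arcs by Lemma~\ref{lem:3d-well-behaved}, so its image consists of $O(1)$ circular arcs and line segments; each planar circular arc splits into at most two $x$-monotone pieces at its leftmost and rightmost vertical tangencies, and each line segment is $x$-monotone, giving $O(1)$ $x$-monotone pieces per projected curve in total.

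The main obstacle is making the genericity argument for $N$ fully rigorous: one must simultaneously guarantee $N \notin \overline{B}$, that no great circle supporting an arc in either family passes through $N$, and that the image of $\overline{B}$ remains bounded and of controlled shape so that the standard Bentley-Ottmann sweep applies without modification. A secondary subtlety is verifying that red-blue crossings of the projected arcs correspond bijectively to red-blue crossings of their preimages on $\sS^2$, so that the planar sweep indeed reports exactly the inter-intersections sought in the spherical setting.
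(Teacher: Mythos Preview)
Your proposal is correct and takes essentially the same approach as the paper: both invoke the classical circle-preserving property of stereographic projection, deduce disjointness preservation from bijectivity/homeomorphism, and obtain the $O(1)$ $x$-monotone decomposition by splitting each projected circular arc at its two vertical tangencies. Your handling of the projection pole is in fact more careful than the paper's---the paper simply fixes $N$ at the geometric north pole and asserts without justification that ``our band excludes a neighborhood of the poles,'' whereas you explicitly place $N$ outside $\overline{B}$ and generically off the supporting great circles.
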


\begin{proof}
Stereographic projection from the north pole $(0,0,1)$ maps a point $(\theta, \phi)$ on $\sS^2$ to 
\[
(x,y) = \left(\frac{\sin\phi\cos\theta}{1-\cos\phi}, \frac{\sin\phi\sin\theta}{1-\cos\phi}\right)
\]

Great circles on $\sS^2$ are intersections with planes through the origin. Under stereographic projection, these map to circles in the plane (or lines if the plane passes through the north pole). 
Since our band excludes a neighborhood of the poles and each difference surface has $O(1)$ pieces, the projected curves are well-behaved with bounded curvature.

Each projected circle can be decomposed into at most 2 x-monotone arcs (split at the leftmost and rightmost points), while lines are already x-monotone. 
Since each intersection curve consists of $O(1)$ great circle arcs, we obtain $O(1)$ x-monotone pieces per curve after projection.

Within the band $\{\omega : \lambda_1 < \Delta_{\alpha,\sigma}(\omega) \leq \lambda_2\}$, the red arcs (from $\beta$-surfaces) are pairwise disjoint by construction---any intersection would be an intra-$\beta$ event excluded by pre-refinement. 
Similarly, blue arcs (from $\gamma$-surfaces) are pairwise disjoint. 
The projection preserves this disjointness since it is a homeomorphism away from the north pole.
\end{proof}

\end{document}